\documentclass[letterpaper,journal]{IEEEtran}
\usepackage{amsmath,amsfonts,amssymb}
\usepackage{amsthm}
\usepackage{algorithmic}
\usepackage{multirow}
\usepackage{booktabs}
\usepackage[ruled,linesnumbered]{algorithm2e}
\usepackage{array}
\usepackage[caption=false,font=footnotesize,labelfont=rm,textfont=rm]{subfig}
\usepackage{textcomp}
\usepackage{stfloats}
\usepackage{url}
\usepackage{verbatim}
\usepackage{graphicx}
\usepackage[colorlinks,citecolor=blue,linkcolor=blue,urlcolor=blue]{hyperref}
\usepackage{cite}
\usepackage{xcolor}
\newcommand{\bm}[1]{\boldsymbol{#1}}
\newcommand{\tr}{\operatorname{tr}}
\newcommand{\diag}{\operatorname{diag}}
\newcommand{\vecop}{\operatorname{vec}}

\newtheorem{theorem}{Theorem}

\newtheorem{proposition}{Proposition}
\newtheorem{corollary}{Corollary}
\newtheorem{remark}{Remark}

\begin{document}

\title{Reconfiguring Sparse Apertures: Partial-Mobility Planar FAS for 2-D DOA Estimation}

\author{Xiaokai Song,
	Tuo Wu,
	Jie Tang, 
	Maged Elkashlan,  
	K. C. Ho,~\IEEEmembership{Fellow,~IEEE},
	Kin-Fai Tong,~\IEEEmembership{Fellow,~IEEE} 
	
	\thanks{Xiaokai Song, Jie Tang, and Tuo Wu are with the School of Electronic and Information Engineering, South China University of Technology, Guangzhou 510640, China (e-mail: $\rm  songxiaokai@scut.edu.cn; wutuo@scut.edu.cn; eejtang@scut.edu.cn  $).}
	\thanks{Maged Elkashlan is with the School of Electronic Engineering and Computer Science, Queen Mary University of London, London E1 4NS, U.K (e-mail: $\rm maged.elkashlan@qmul.ac.uk$).} 
    \thanks{K. C. Ho is with the Department of Electrical Engineering and Computer Science, University of Missouri, Columbia, MO 65211, USA (e-mail: $\rm hod@missouri.edu$).}
    \thanks{K.-F. Tong is with the School of Science and Technology, Hong Kong Metropolitan University, Hong Kong, China (e-mail:$\rm ktong@hkmu.edu.hk$).} 
}

\maketitle

\begin{abstract}
Classical sparse arrays enlarge the sensing aperture under limited element and radio-frequency (RF) chain budgets, but their fixed geometries impose a persistent tradeoff between wide-sector identifiability and local angular resolution. This paper converts this static design into a two-state reconfigurable sparse aperture by allowing only part of a planar fluid antenna system (FAS) to move after an initial observation. A compact, Nyquist-spaced seed first provides an ambiguity-controlled acquisition geometry. The remaining fluid ports then move or switch to information-efficient refinement positions, and the measurements collected before and after reconfiguration are jointly processed. We formulate the receiver under fixed total numbers of ports and RF chains, a movement budget, and a common total-snapshot budget. A policy-level Fisher information matrix (FIM) identity accounts for the observation-dependent refinement geometry. For a single source, a relaxed planar D-optimal analysis gives the corner-favoring aperture law, while a finite-port certificate bounds the information retained under spacing, reachability, and partial-actuation constraints. For multiple sources, an aperture-and-conditioning surrogate generates feasible sparse layouts that are ranked and locally refined using the exact frame FIM with coarray, spacing, and movement regularization. Seed multiple signal classification (MUSIC) identifies a reliable angular basin, followed by joint concentrated-likelihood refinement over both states. Equal-budget simulations show that reconfiguring a sparse aperture converts available area and movement into lower angular error and retains much of the all-movable gain with fewer actuated ports. They also identify the operating boundary: if a full sparse aperture can remain permanently deployed, avoiding movement and snapshot splitting can be preferable.
\end{abstract}

\begin{IEEEkeywords}
Fluid antenna system, D-optimal design, difference coarray, reconfigurable sparse aperture, two-dimensional DOA estimation.
\end{IEEEkeywords}


\section{Introduction}\label{sec:intro}

\IEEEPARstart{S}{parse} arrays are a classical response to a fundamental direction-finding constraint: high angular resolution favors a large physical aperture, whereas the available antenna elements, radio-frequency (RF) chains, feeds, and calibration resources are limited. Minimum-redundancy, nested, and coprime arrays spread a finite number of sensors over a larger support or enlarge the difference-coarray degrees of freedom~\cite{moffet_mra,pal_nested,vaidyanathan_coprime}. Compressed sparse arrays retain the aperture of a larger fixed linear array with fewer RF chains~\cite{guo_compressed_sparse}, while two-dimensional extensions use gridless recovery, sparse billboard or T-shaped geometries, and tensor processing for nested planar arrays~\cite{wang_fri_2d,alawsh_billboard,xu_tensor_lnested}. These developments establish sparse geometry as an effective way to exchange dense spatial sampling for aperture and information.

Nonetheless, the same sparsity that increases resolution also creates a static geometry tradeoff. Long and nonuniform baselines improve local steering-manifold sensitivity, but may produce narrow attraction basins, elevated sidelobes, or grating-lobe ambiguity. A compact half-wavelength array offers a more reliable wide-sector search, but sacrifices the aperture needed to separate nearby sources. Moreover, a fixed sparse layout is optimized once for a prescribed sector or source-separation regime and cannot reallocate its limited sensors after observing the scene. Classical array theory can characterize either a compact acquisition array or a high-resolution sparse array, but it cannot make the same hardware change roles within a sensing frame. This motivates a sharper question: can a finite set of physical ports first provide ambiguity-controlled acquisition and then reconfigure into a task-adapted sparse aperture for refinement?

Fluid antenna systems (FASs) provide the missing operational degree of freedom by making the active antenna position controllable after deployment~\cite{wong_fas,wong_fas_prelim}. Liquid-metal motion, mechanical translation, calibrated pixel or port switching, and electronically reconfigurable meta-fluid surfaces offer different physical routes to this reconfigurability~\cite{new_fas_tutorial,pixel_fas,wu_scalable_fas,Meta-Fluid lot}. Hardware surveys and 6G perspectives therefore identify position and electromagnetic reconfiguration as native FAS capabilities~\cite{wu_fas_6g,new_fas_redefining}. Most algorithmic FAS studies, however, optimize communication objectives such as multiple access~\cite{wong_fama,slow_fama,wu_fas_conoma}, transmission with statistical channel state information~\cite{fas_stat_csi}, or physical-layer security~\cite{wu_fas_secrecy}. Direction finding follows a different mechanism: steering-manifold derivatives determine local Fisher information, whereas the global ambiguity or likelihood surface determines whether an estimator enters the correct angular basin~\cite{van_trees,stoica_music,kay}. Communication-oriented placement objectives therefore do not directly determine which ports should move, when reconfiguration should occur, or how observations collected on both sides of the movement should be combined.

Recent sensing studies begin to connect mobility and sparse-array processing. They include one-dimensional sparse FASs with aligned or misaligned observations~\cite{xu_future_fluid}, time-constrained designs with fully movable or fixed-reference structures~\cite{xu_time_constrained_fas}, scalable compact-to-expanded configurations~\cite{wu_scalable_fas}, continuous D-optimal sparse-FAS placement~\cite{wu_sparse_fas_continuous}, hybrid analog--digital FASs for compressive 2-D DOA estimation~\cite{tian_hybrid_fas_doa}, and sparse-array synthesis through arbitrary linear motion~\cite{zhang_moving_2d}. Related movable-antenna work addresses MIMO capacity~\cite{ma_mimo_capacity}, multiuser transmission~\cite{zhu_ma_multiuser}, antenna-position optimization~\cite{mei_graph}, channel estimation~\cite{xiao_ma_channel}, prototype-validated position control~\cite{dong_ma_proto}, and sensing or integrated sensing and communications~\cite{ma_movable_sensing,lyu_ma_isac}. Meta-fluid architectures have also been developed for multi-user ISAC with electronic reconfiguration and full-wave validation~\cite{Meta-Fluid}.
These results confirm that antenna geometry and electromagnetic states can be optimized, but they do not establish a partially actuated planar sparse-aperture protocol with a fixed acquisition seed, an observation-dependent refinement state, joint processing of pre- and post-movement measurements, explicit movement and settling costs, and matched RF-chain and snapshot budgets.

The objective of this work is therefore not to synthesize another static sparse array, but to turn sparse-aperture design into a causal receiver operation. The same bounded hardware first uses a compact, calibrated, half-wavelength seed for wide-sector acquisition. After the seed observation identifies where additional resolution is needed, only the remaining fluid ports move or switch to create long, source-conditioned sparse baselines. Measurements from both states are retained. Moving every port would enlarge the feasible placement set, but would also increase actuation, switching, recalibration, and settling overhead~\cite{new_fas_tutorial,xu_time_constrained_fas}. Partial mobility instead asks how much of the all-movable information can be retained per actuated port while preserving the acquisition geometry.

This transition from a static array to an adaptive sparse aperture creates four coupled challenges. First, the refinement geometry is a function of the seed data, so the information analysis must describe a data-adaptive experiment and accumulate rather than replace the information acquired before movement. Second, the moved sparse aperture must increase local sensitivity without making the global search unreliable; for multiple sources, coordinate spread alone does not ensure a well-conditioned steering manifold or adequate separation~\cite{van_trees,stoica_music,wu_scalable_fas}. Third, the compact seed may not resolve closely spaced sources at low signal-to-noise ratio, so its confidence must control both whether movement is triggered and which angular region guides the refinement layout. Fixing this seed also shrinks the placement set, making the retained information a quantity to be bounded rather than assumed. Fourth, a realizable policy must respect travel, spacing, settling, RF-access, and total-observation constraints, while comparisons must distinguish equal snapshot support from equal wall-clock time.

To address these challenges, we propose a partial-mobility planar FAS that operationally separates acquisition from refinement. The fixed Nyquist seed supplies a persistent calibration reference and an ambiguity-controlled search region; the movable ports reconfigure the sparse aperture only after the seed data define a credible angular basin. Following D-optimal experimental design~\cite{kiefer_doptimal,fedorov_opt}, we derive the Fisher information matrix (FIM) of the resulting data-adaptive policy, a relaxed planar aperture upper bound, and a finite-port certificate that incorporates spacing, reachability, and actuator count. For multiple sources, an aperture-and-conditioning surrogate balances coordinate spread, steering-vector correlation, local coarray support, spacing, and movement cost; the exact two-state FIM then ranks and locally refines the candidates. Seed multiple signal classification (MUSIC) \cite{schmidt_music} is used for basin selection rather than final accuracy, after which a joint concentrated maximum-likelihood (ML) estimator uses the observations from both acquisition states. All baselines have the same total numbers of ports and RF chains and the same total snapshot budget. Movement and settling remain explicit feasibility and latency costs, so snapshot matching is not interpreted as zero-cost reconfiguration.

The main contributions are summarized as follows.

\begin{itemize}
    \item \textbf{Reconfigurable sparse-aperture architecture:} We transform a one-time planar sparse-array placement into a two-state sensing protocol with a fixed Nyquist acquisition seed, partially movable refinement ports, shared RF resources, explicit settling time, and a fixed total snapshot budget.
    \item \textbf{Adaptive information and finite-port retention:} We prove the policy-level FIM identity for observation-dependent movement, derive the relaxed rectangular-aperture D-optimal upper bound, and establish computable finite-port retention certificates under spacing, reachability, and actuator constraints.
    \item \textbf{Movement-aware design and multistate estimation:} We combine exact-FIM candidate ranking with an aperture-and-conditioning surrogate, smooth 2-D coarray support, spacing, and movement cost. The local refinement has a conditional Karush--Kuhn--Tucker (KKT) guarantee and explicit complexity, while confidence-gated seed MUSIC and joint two-state concentrated ML connect sparse-aperture control to estimation.
    \item \textbf{Boundary-aware validation:} Equal-budget experiments compare compact uniform planar, frozen sparse, random-reconfigured, frozen continuous, and all-movable arrays using MUSIC and local ML. Area, movement, port-count, source-separation, prior-shift, and position-error tests identify both the gains and the operating limits of reconfiguration.
\end{itemize}

The remainder of this paper is organized as follows. Section~\ref{sec:model} presents the planar FAS signal and reconfiguration model, difference coarray, and performance metrics. Section~\ref{sec:limits} analyzes the FIM and the fundamental geometry tradeoffs. Section~\ref{sec:opt} develops the seeded movement-constrained FAS reconfiguration and estimation method. Section~\ref{sec:robust} discusses practical movement and robustness constraints. Section~\ref{sec:sim} reports the simulation results, and Section~\ref{sec:conc} concludes the paper.

\emph{Notation}: Bold lowercase and uppercase letters denote vectors and matrices. $(\cdot)^T$, $(\cdot)^H$, and $(\cdot)^{-1}$ represent transpose, Hermitian transpose, and inverse. $\mathbb{E}\{\cdot\}$ denotes expectation, $\Re\{\cdot\}$ denotes the real part, and $\odot$ the Khatri--Rao product when applied column-wise. $\mathbf{I}_N$ is the $N\times N$ identity matrix.

\section{Planar FAS Signal and Reconfiguration Model}\label{sec:model}

\subsection{FAS Reconfiguration State}

\begin{figure}[h]
\centering
\includegraphics[width=0.7\columnwidth]{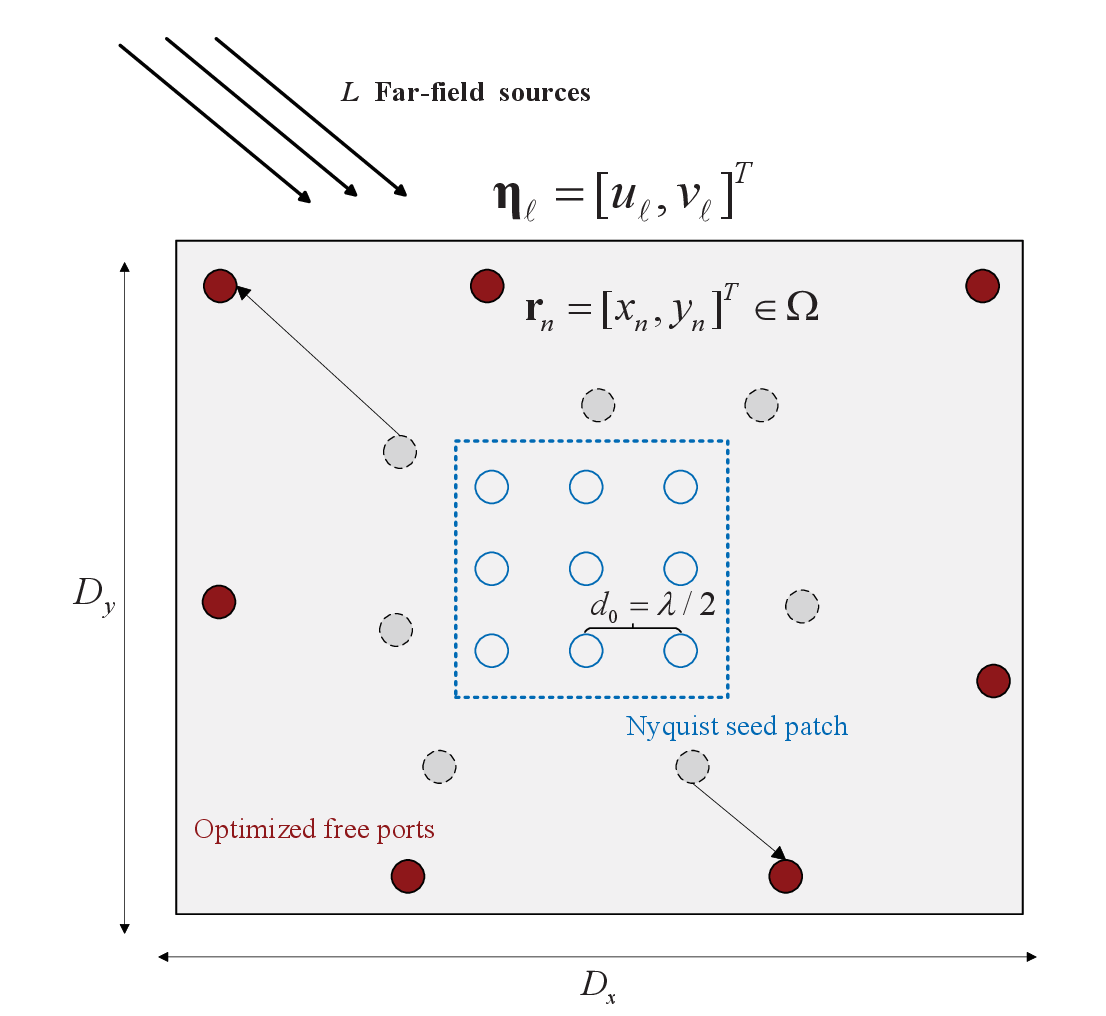}
	\caption{Seed-to-refinement planar FAS receiver, where the Nyquist seed remains fixed and the free ports move within the deployment region.}
	\label{fig:system_model}

\end{figure}

Figure~\ref{fig:system_model} illustrates the sensing model considered in this work. The receiver is a planar FAS terminal with $N_s$ calibrated seed ports and $N_f=N-N_s$ free fluid ports. The seed ports remain in a compact half-wavelength patch, while the free ports can move or switch among calibrated positions inside the same planar aperture. This structure can be implemented by liquid-metal or mechanically translated radiating ports, or by a pixel/switch-based FAS surface in which the active feed point is selected from a calibrated set. In the seed state, only the $N_s$ seed ports are connected; the free ports are parked or radiatively disabled and the corresponding RF branches remain idle. After reconfiguration and settling, the RF switch connects all $N$ ports to the refinement receiver. The main model uses $N_{\rm RF}=N$ simultaneous receive chains to isolate the geometry effect. Fewer chains can time-multiplex ports only when the probing waveform is repeated or phase coherent over the switching cycle; asynchronous passive sources instead require simultaneous chains or a calibrated hybrid combiner. The FAS gain therefore comes from changing calibrated locations under a fixed simultaneous front-end budget, rather than from adding antenna elements or RF chains.

A sensing frame is divided into a seed acquisition state and a refinement state. Let
\begin{equation}
    \mathbf{r}_n^{(m)}=[x_n^{(m)},y_n^{(m)}]^T\in\Omega,\quad
    \Omega=[0,D_x]\times[0,D_y],
\end{equation}
be the position of the $n$th port in state $m\in\{0,1\}$, where $m=0$ is the seed/standby state and $m=1$ is the post-movement refinement state. The seed ports satisfy
\begin{equation}
    \mathbf{r}_n^{(1)}=\mathbf{r}_n^{(0)}\in\mathcal{S},\quad n=1,\ldots,N_s,
\end{equation}
where $\mathcal{S}$ is a compact Nyquist patch. The free fluid ports obey the movement model
\begin{align}
    \mathbf{r}_n^{(1)}&\in\Omega,\quad n=N_s+1,\ldots,N,\\
    \|\mathbf{r}_n^{(1)}-\mathbf{r}_n^{(0)}\|_2&\leq d_{\rm mv},
    \quad n=N_s+1,\ldots,N,
\end{align}
or, equivalently, a total movement budget $\sum_{n=N_s+1}^{N}\|\mathbf{r}_n^{(1)}-\mathbf{r}_n^{(0)}\|_2\leq B_{\rm mv}$. For a liquid-metal implementation, this constraint represents actuation distance within the sensing frame; for a pixel or switch-based FAS, it represents switching among calibrated positions with an equivalent reconfiguration cost. The subsequent formulas use $\mathbf{r}_n$ for a generic calibrated FAS state, and the optimized geometry refers to the refinement state after movement.

\subsection{Spatial-Frequency Observation Model}

For $L$ narrowband far-field sources, define the direction cosines
\begin{equation}
    \bm{\eta}_\ell=[u_\ell,v_\ell]^T,\quad
    u_\ell=\sin\vartheta_\ell\cos\varphi_\ell,\quad
    v_\ell=\sin\vartheta_\ell\sin\varphi_\ell,
\end{equation}
where $\varphi_\ell$ is the azimuth angle and $\vartheta_\ell$ is the polar angle measured from the array broadside. At calibrated coordinates, the steering vector is
\begin{equation}\label{eq:steering2d}
    [\mathbf{a}(\bm{\eta}_\ell)]_n
    =
    \exp\!\left(j\frac{2\pi}{\lambda}\mathbf{r}_n^T\bm{\eta}_\ell\right),
    \quad n=1,\ldots,N.
\end{equation}

The active sets are $\mathcal{I}_0=\{1,\ldots,N_s\}$ and $\mathcal{I}_1=\{1,\ldots,N\}$. A snapshot from state $m$ follows
\begin{equation}\label{eq:signal2d}
    \mathbf{x}^{(m)}(t)=\mathbf{A}^{(m)}(\bm{\eta})\mathbf{s}^{(m)}(t)+\mathbf{n}^{(m)}(t),
\end{equation}
where $N_m=|\mathcal I_m|$, $\mathbf{A}^{(m)}$ contains the state-dependent steering vectors, and $\mathbf n^{(m)}(t)\sim\mathcal{CN}(\mathbf0,\sigma^2\mathbf I_{N_m})$. The source samples are unknown deterministic nuisance parameters. For $T_m$ snapshots, let
$\mathbf X^{(m)}=[\mathbf x^{(m)}(1),\ldots,
\mathbf x^{(m)}(T_m)]$,
$\mathbf S^{(m)}=[\mathbf s^{(m)}(1),\ldots,
\mathbf s^{(m)}(T_m)]$, and
$\mathbf P^{(m)}=\mathbf S^{(m)}(\mathbf S^{(m)})^H/T_m$.
The total snapshot budget over the seed and refinement states is
$T_{\rm tot}=T_0+T_1$. 
The simulations use the same diagonal $\mathbf P=\diag(P_1,\ldots,P_L)$ in both states. After settling, the coordinates are known and the covariance used for coarray interpretation is
\begin{equation}\label{eq:cov2d}
    \mathbf{C}_x^{(m)}(\bm{\eta})
    =
    \mathbf{A}^{(m)}(\bm{\eta})\mathbf{P}^{(m)}(\mathbf{A}^{(m)})^H(\bm{\eta})
    +\sigma^2\mathbf{I}_{N_m}.
\end{equation}

Let $\mathbf R^{(m)}
=[\mathbf r_1^{(m)},\ldots,\mathbf r_N^{(m)}]^T$
denote the port-coordinate matrix in state $m$.
For two layouts fixed before data acquisition, the independent snapshot blocks have additive equivalent spatial-frequency information after the state-specific waveform nuisance parameters are eliminated:
\begin{equation}\label{eq:multistate_fim}
    \mathbf{J}_{uv}^{\rm frame}
    =
    \sum_{m\in\mathcal{M}}
    \mathbf{J}_{uv}\!\left(\bm{\eta};\mathbf{R}^{(m)},\mathbf{P}^{(m)},T_m\right),
\end{equation}
where \(\mathcal M=\{0,1\}\) denotes the set of acquisition states.  
The seed term stabilizes basin selection, whereas the refinement term supplies most of the high-resolution information. A frozen array is obtained when the geometry does not depend on the seed data.

The actual receiver is adaptive since
$\mathbf R^{(1)}=\pi(\mathbf X^{(0)})$, where $\pi$
denotes the reconfiguration policy that maps the seed
observation to a feasible refinement layout. The next result distinguishes this joint experiment from simply evaluating a deterministic geometry at the true DOAs.

\begin{proposition}[FIM of a data-adaptive reconfiguration policy]\label{prop:adaptive_fim}
Let $\pi$ be a deterministic or externally randomized policy with no explicit dependence on the unknown DOAs. Assume regular likelihoods with parameter-independent support and state-specific waveform nuisance parameters. Then the equivalent FIM of the joint observation $(\mathbf X^{(0)},\mathbf X^{(1)})$ is
\begin{equation}\label{eq:adaptive_fim}
    \overline{\mathbf{J}}_{uv}^{\rm policy}
    =\mathbf{J}_{uv}^{(0)}+
    \mathbb{E}_{\mathbf{X}^{(0)}\mid\bm{\eta}}
    \!\left[\mathbf{J}_{uv}^{(1)}\!\left(\bm{\eta};\pi(\mathbf{X}^{(0)}),\mathbf{P}^{(1)},T_1\right)\right].
\end{equation}
The selected-layout sum is therefore a conditional design diagnostic, whereas \eqref{eq:adaptive_fim} is the pre-acquisition policy information.
\end{proposition}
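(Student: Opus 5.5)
The plan is to use the chain rule for the joint density of an adaptive experiment and then eliminate the state-specific nuisance parameters. Collect all unknowns as $\bm\theta=(\bm\eta,\bm\xi^{(0)},\bm\xi^{(1)})$, where $\bm\xi^{(m)}$ holds the waveform samples $\mathbf S^{(m)}$ (and $\sigma^2$ if it is treated as unknown). Because $\pi$ does not depend explicitly on $\bm\eta$, and any external randomization $\mathbf U$ has a $\bm\theta$-free law, the joint density factors as
\begin{equation*}
p(\mathbf X^{(0)},\mathbf X^{(1)};\bm\theta)=p_0(\mathbf X^{(0)};\bm\eta,\bm\xi^{(0)})\,p_1\!\left(\mathbf X^{(1)};\bm\eta,\bm\xi^{(1)}\mid\mathbf R^{(1)}=\pi(\mathbf X^{(0)})\right).
\end{equation*}
The second factor is the Gaussian model \eqref{eq:signal2d} evaluated at the realized layout. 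The noise in the two blocks is independent, so, conditional on $\mathbf X^{(0)}$, the second block is an ordinary fixed-geometry experiment.

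Next, I will write the joint score as the sum of the two block scores. Under the stated regularity (parameter-independent support, so differentiation and integration commute), the conditional score of the refinement block has zero mean given $\mathbf X^{(0)}$. Taking iterated expectations, the cross terms vanish:
\begin{equation*}
\mathbb E\!\left[\nabla\log p_0\,(\nabla\log p_1)^T\right]=\mathbb E_{\mathbf X^{(0)}}\!\left[\nabla\log p_0\;\mathbb E\!\left[\nabla\log p_1\mid\mathbf X^{(0)}\right]^T\right]=\mathbf 0.
\end{equation*}
The full FIM is therefore $\mathbf F^{(0)}+\mathbb E_{\mathbf X^{(0)}}[\mathbf F^{(1)}(\pi(\mathbf X^{(0)}))]$. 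If external randomization is present, I condition on $\mathbf U$ first and then average over it.

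The main obstacle is the nuisance elimination. The paper's $\mathbf J_{uv}$ is a Schur complement, and a Schur complement of an averaged matrix is not, in general, the average of Schur complements. This is where the assumption of \emph{state-specific} nuisance parameters does the work. Ordered as $(\bm\eta,\bm\xi^{(0)},\bm\xi^{(1)})$, the joint FIM has the arrow structure: an $\bm\eta$ row and column coupling to each $\bm\xi^{(m)}$, with block-diagonal nuisance blocks $\mathbf F_{\xi\xi}^{(0)}$ and $\bar{\mathbf F}_{\xi\xi}^{(1)}$. Eliminating $\bm\xi^{(0)}$ gives exactly $\mathbf J_{uv}^{(0)}$.

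For $\bm\xi^{(1)}$, the deterministic-signal Gaussian model gives
\begin{equation*}
\mathbf F_{\xi\xi}^{(1)}=\tfrac{2}{\sigma^2}\,\mathbf I_{T_1}\otimes\Re\{\mathbf A^H\mathbf A\}.
\end{equation*}
This block depends on the layout. To handle it, I will use the standard Stoica--Nehorai concentration snapshot by snapshot. The nuisance-eliminated information is $\frac{2}{\sigma^2}\sum_t\Re\{(\mathbf D^H\Pi_{\mathbf A}^\perp\mathbf D)\odot(\mathbf s_t\mathbf s_t^H)^T\}$, which is obtained by projecting the $\bm\eta$-score onto the orthogonal complement of the nuisance scores. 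I then need to argue that, because every realization of $\bm\xi^{(1)}$ is a free per-snapshot parameter, the projection can be done pointwise in $\mathbf X^{(0)}$ before averaging. Concretely, I will reparametrize $\bm\xi^{(1)}$ as a function of $(\mathbf X^{(0)},\bm\xi^{(1)})$, namely the realized waveform in the orthogonalized coordinates, so that the nuisance scores remain orthogonal to the efficient $\bm\eta$-score conditionally on $\mathbf X^{(0)}$.

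If this conditional orthogonalization is not exactly permitted, I fall back to defining the "equivalent FIM" in the proposition as the expected conditional efficient information. The conditional score decomposition above then still yields \eqref{eq:adaptive_fim}. Finally, I will note that fixing $\pi$ to a constant recovers the sum in \eqref{eq:multistate_fim}, which is the selected-layout diagnostic remark in the statement.
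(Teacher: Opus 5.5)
Your score-level argument is exactly the paper's proof: you factor the likelihood through $\pi(\mathbf X^{(0)})$, use $\mathbb E[\nabla\log p_1\mid\mathbf X^{(0)}]=\mathbf 0$, drop the cross terms and take an iterated expectation. You are also right that nuisance elimination is the delicate step. The paper handles it in one sentence (``state-wise elimination of the independent waveform nuisance blocks preserves the sum'').

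Your concrete fix, however, fails. A map $\bm\zeta=\bm\xi^{(1)}+\mathbf K(\mathbf X^{(0)})\bm\eta$ is not a reparametrization of the model. Holding $\bm\zeta$ fixed makes the physical refinement waveform $\mathbf S^{(1)}$ depend on the seed data, so you would be computing the information of a different experiment. The efficient information is invariant under every legitimate, data-free nuisance reparametrization. With one fixed unknown $\mathbf S^{(1)}$ it therefore equals $\mathbf J^{(0)}+\mathrm{Sc}(\mathbb E[\mathbf F^{(1)}])$, where $\mathrm{Sc}(\mathbf F)=\mathbf F_{\eta\eta}-\mathbf B^T\mathbf C^{-1}\mathbf B$ with $\mathbf B=\mathbf F_{\xi\eta}$ and $\mathbf C=\mathbf F_{\xi\xi}$.

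With $\mathbf K_0=\bar{\mathbf C}^{-1}\bar{\mathbf B}$ built from the averaged blocks, one has the exact identity $\mathrm{Sc}(\mathbb E\mathbf F)-\mathbb E[\mathrm{Sc}(\mathbf F)]=\mathbb E[(\mathbf C^{-1}\mathbf B-\mathbf K_0)^T\mathbf C(\mathbf C^{-1}\mathbf B-\mathbf K_0)]\succeq\mathbf 0$. Equality holds if and only if $\mathbf F_{\xi\xi}^{-1}\mathbf F_{\xi\eta}$ is almost surely constant over the selected layouts, and this generically fails. For one source, the real form of the regression coefficient of the $u$-score on the snapshot-$t$ waveform score is $jk\,\bar x\,s(t)$. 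Here $\bar x$ is the centroid of the selected layout, measured from the fixed phase reference. The gap in the $u$-entry is therefore $2T_1Pk^2N\operatorname{Var}(\bar x)/\sigma^2>0$ whenever $\pi$ returns layouts with different centroids. So under a single fixed waveform, \eqref{eq:adaptive_fim} is in general only a Loewner lower bound on the joint efficient information, and no elimination argument can deliver it as an identity.

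The proof therefore cannot be completed by a cleverer elimination; the claim itself has to be pinned down. Your fallback is the right move, not a last resort, and you have two ways to make it precise.

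\emph{Definitional route.} Define the equivalent FIM as $\mathbb E_{\mathbf X^{(0)}\mid\bm\eta}$ of the layout-wise (conditional) efficient information. This is how the paper's ``state-wise elimination'' sentence has to be read, and your score decomposition then proves the statement immediately.

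\emph{Modelling route.} Strengthen ``state-specific'' to \emph{layout-specific} nuisance. Take a finite codebook $\{\mathbf R_k\}$ selected with probabilities $p_k(\bm\eta)$, with a separate waveform $\bm\xi^{(1)}_k$ entering the likelihood only on $\{\pi(\mathbf X^{(0)})=\mathbf R_k\}$. Then the nuisance scores are mutually orthogonal across $k$, and the blocks become $p_k\mathbf F_{\xi\xi}(\mathbf R_k)$ and $p_k\mathbf F_{\eta\xi}(\mathbf R_k)$. The arrow-matrix Schur complement yields exactly $\mathbf J^{(0)}+\sum_k p_k\mathbf J^{(1)}(\mathbf R_k)$. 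This is the legitimate version of what your data-dependent reparametrization was trying to achieve.

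Whichever route you take, state it explicitly as the meaning of the result rather than presenting \eqref{eq:adaptive_fim} as a consequence of the assumptions as written.
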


\begin{proof}
The joint likelihood factors as
$p_{\bm\eta}(\mathbf X^{(0)})
p_{\bm\eta}(\mathbf X^{(1)}\mid\pi(\mathbf X^{(0)}))$.
Let
$\mathbf s_0=\nabla_{\bm\eta}\log p_{\bm\eta}(\mathbf X^{(0)})$
and
$\mathbf s_1=\nabla_{\bm\eta}\log
p_{\bm\eta}(\mathbf X^{(1)}\mid\pi(\mathbf X^{(0)}))$
denote the corresponding score vectors. The joint score is
$\mathbf s_0+\mathbf s_1$, and regularity gives
$\mathbb E[\mathbf s_1\mid\mathbf X^{(0)}]=\mathbf0$.
Both cross terms therefore vanish; the remaining score covariances give \eqref{eq:adaptive_fim}. State-wise elimination of the independent waveform nuisance blocks preserves the sum.
\end{proof}

For a realized seed observation, the policy selects a deterministic refinement layout, and substituting this layout into~(9) defines the selected-layout information diagnostic.
Under the assumptions of Proposition~1, averaging these matrices over $\mathbf X^{(0)}$ recovers the policy FIM in~(10). The former is used for conditional layout ranking and realized-frame evaluation, whereas the latter characterizes the adaptive reconfiguration rule before the seed data are observed.

Hence, the true-direction placement is used only as an oracle benchmark. An implementable controller forms its design distribution from a seed confidence region, posterior, or slow-time prior, and policy-level bounds average over the resulting seed uncertainty.

Conversion to physical angles uses the Jacobian
\begin{equation}
    \mathbf{B}_\ell
    =
    \frac{\partial (u_\ell,v_\ell)}{\partial(\vartheta_\ell,\varphi_\ell)}
    =
    \begin{bmatrix}
        \cos\vartheta_\ell\cos\varphi_\ell & -\sin\vartheta_\ell\sin\varphi_\ell\\
        \cos\vartheta_\ell\sin\varphi_\ell & \sin\vartheta_\ell\cos\varphi_\ell
    \end{bmatrix},
\end{equation}
with $\mathbf{B}=\diag(\mathbf{B}_1,\ldots,\mathbf{B}_L)$, the FIM for $(\vartheta,\varphi)$ is $\mathbf{B}^T\mathbf{J}_{uv}\mathbf{B}$.

\subsection{2-D Difference Coarray}

Suppressing the state index, covariance vectorization gives
\begin{equation}\label{eq:vec2d}
    \vecop(\mathbf{C}_x-\sigma^2\mathbf{I})
    =
    (\mathbf{A}^*\odot\mathbf{A})\mathbf{p},
\end{equation}
where $\mathbf p=[P_1,\ldots,P_L]^T$. Each virtual-array entry is indexed by a physical difference:
\begin{equation}
    \exp\!\left(j\frac{2\pi}{\lambda}(\mathbf{r}_i-\mathbf{r}_j)^T\bm{\eta}_\ell\right),
\end{equation}
so the 2-D difference coarray is
\begin{equation}\label{eq:coarray2d}
    \mathbb{D}_2(\mathbf{R})
    =
    \{\mathbf{r}_i-\mathbf{r}_j:1\leq i,j\leq N\}.
\end{equation}
If $\mathbb{D}_2$ contains a rectangular Nyquist patch
\begin{equation}
    \mathcal{G}(M_x,M_y)
    =
    \{(m_xd_0,m_yd_0): |m_x|\leq M_x,\ |m_y|\leq M_y\},
\end{equation}
with $d_0=\lambda/2$, the covariance contains a virtual uniform planar array (UPA) patch suitable for ambiguity-controlled subspace processing. Continuous locations rarely produce exact lattice lags, so Section~\ref{sec:opt} rewards differences near the desired patch rather than imposing exact equality.

\subsection{Frozen Arrays and FAS Reconfiguration Policies}

A frozen array selects one coordinate matrix $\mathbf R=[\mathbf r_1,\ldots,\mathbf r_N]^T$; a FAS policy selects a reachable refinement matrix from the current state. With $d_{ij}=\|\mathbf r_i-\mathbf r_j\|_2$, the grid and partial-mobility feasible sets are
\begin{align}
    \mathcal{R}_{\rm grid}
    &=
    \left\{\mathbf{R}\in\Omega^N:
    \mathbf{r}_n/d_0\in\mathbb{Z}^2,\ 
    d_{ij}\geq d_{\min}\right\},\\
    \mathcal{R}_{\rm FAS}^{\rm part}
    &=
    \left\{\mathbf{R}^{(1)}\in\Omega^N:
    \mathbf{r}_n^{(1)}=\mathbf{r}_n^{(0)},\ n\leq N_s,
    \right.\nonumber\\[-1mm]
    &\quad\left.
    d_{ij}^{(1)}\geq d_{\min},\ 
    C_{\rm mv}\leq B_{\rm mv}\right\}.
\end{align}
Here,
\begin{equation}
    C_{\rm mv}(\mathbf{R}^{(1)};\mathbf{R}^{(0)})
    =
    \sum_{n=N_s+1}^{N}
    \|\mathbf{r}_n^{(1)}-\mathbf{r}_n^{(0)}\|_2.
\end{equation}
Thus the FAS feasible set is state dependent. Setting $B_{\rm mv}=0$, or selecting $\mathbf R^{(1)}$ independently of seed data, recovers a frozen design. The distinction is operational: the FAS reconfigures only after coarse directional information becomes available.

The numerical baselines therefore include compact UPA, frozen sparse-grid, random-reconfigured, frozen continuous, and all-movable designs. Their estimators separate the benefit of geometry from the benefit of joint multistate processing.

\subsection{Performance Metrics}

All accuracy curves are reported in azimuth/polar angle. With
$\mathbf{B}=\diag(\mathbf{B}_1,\ldots,\mathbf{B}_L)$ denoting the Jacobian from angular parameters to spatial frequencies, the angular FIM is
$\mathbf{J}_{\vartheta\varphi}=\mathbf{B}^{T}\mathbf{J}_{uv}\mathbf{B}$, and the plotted Cramér-Rao bound (CRB) metric is
\begin{equation}
    \mathrm{CRB}_{2\mathrm{D}}(\mathbf{R})
    =
    \sqrt{
    \frac{1}{L}
    \tr\!\left(\mathbf{J}_{\vartheta\varphi}^{\dagger}(\bm{\eta};\mathbf{R})\right)
    }\frac{180}{\pi},
\end{equation}
where $(\cdot)^\dagger$ is the Moore--Penrose inverse. For Monte Carlo trials, we map each estimate to
$\hat{\bm{\alpha}}_\ell=[\hat{\vartheta}_\ell,\hat{\varphi}_\ell]^T$, and then the RMSE is computed after permutation matching:
\begin{equation}
    \mathrm{RMSE}_{2\mathrm{D}}
    =
    \frac{180}{\pi}
    \sqrt{
    \mathbb{E}\!\left[
    \min_{\pi}
    \frac{1}{L}\sum_{\ell=1}^L
    \|\hat{\bm{\alpha}}_{\pi(\ell)}-\bm{\alpha}_\ell\|_2^2
    \right]}.
\end{equation}
Azimuth errors are wrapped before assignment. The placement is optimized in direction-cosine space, but both CRB and RMSE are evaluated in the same physical-angle domain. Two-state CRBs use~\eqref{eq:multistate_fim} with the same total snapshot budget as the corresponding RMSE experiment.

\section{FIM and Fundamental Geometry Insights}\label{sec:limits}

\subsection{Conditional FIM Consistent With the ML Estimator}

The performance bound must use the same statistical model as the concentrated ML estimator. We therefore treat the source waveforms as unknown deterministic nuisance parameters and eliminate them by orthogonal projection~\cite{van_trees,kay}. For clarity, the state superscript is suppressed in this subsection. Let $\bm{\Pi}_{\mathbf{A}}^{\perp}=\mathbf{I}_N-\mathbf{A}(\mathbf{A}^H\mathbf{A})^{\dagger}\mathbf{A}^H$, $\mathbf{d}_{\ell,p}=\partial\mathbf{a}_\ell/\partial \eta_{\ell,p}$ for $p\in\{u,v\}$, and $\mathbf{P}=\mathbf{S}\mathbf{S}^H/T$ as the time-averaged empirical waveform covariance. For indices $i=(\ell,p)$ and $j=(m,q)$, the conditional FIM averaged over this empirical waveform covariance is
\begin{equation}\label{eq:fim_conditional}
    [\mathbf{J}_{uv}]_{ij}
    =\frac{2T}{\sigma^2}\Re\!\left\{
    \mathbf{d}_{\ell,p}^H\bm{\Pi}_{\mathbf{A}}^{\perp}
    \mathbf{d}_{m,q}\,[\mathbf{P}]_{m,\ell}
    \right\}.
\end{equation}

For the uncorrelated equal-power simulations below, we normalize $\mathbf{P}=\mathbf{I}_L$ and vary the signal-to-noise ratio (SNR) through $\sigma^2$. The steering derivatives are explicit. Let $k=2\pi/\lambda$ and $\mathbf{a}_\ell=\mathbf{a}(\bm{\eta}_\ell)$. Then
\begin{align}
    \frac{\partial\mathbf{a}_\ell}{\partial u_\ell}
    &= jk\,\diag(x_1,\ldots,x_N)\mathbf{a}_\ell,\\
    \frac{\partial\mathbf{a}_\ell}{\partial v_\ell}
    &= jk\,\diag(y_1,\ldots,y_N)\mathbf{a}_\ell.
\end{align}

The D-optimal objective is $\log\det\mathbf{J}_{uv}$, following the optimal experimental design criterion that maximizes the information-volume determinant~\cite{kiefer_doptimal,fedorov_opt}.

\subsection{Single-Source Closed Form}

The single-source case gives a useful design law. It isolates the pure aperture effect before source coupling, coarray coverage, and estimator initialization are introduced.

\begin{theorem}[Relaxed single-source planar D-optimal bound]\label{thm:single}
For one source with unknown spatial frequency $(u,v)$ under the conditional model in~\eqref{eq:fim_conditional}, the FIM for $(u,v)$ is proportional to the coordinate covariance matrix
\begin{equation}
    \bm{\Sigma}_{r}
    =
    \frac{1}{N}\sum_{n=1}^N
    (\mathbf{r}_n-\bar{\mathbf{r}})
    (\mathbf{r}_n-\bar{\mathbf{r}})^T.
\end{equation}
Therefore, the relaxed approximate-design problem over probability measures supported on $\Omega=[0,D_x]\times[0,D_y]$ satisfies
\begin{equation}\label{eq:single_det_bound}
    \det\bm{\Sigma}_r\leq \frac{D_x^2D_y^2}{16},
\end{equation}
with equality achieved by equal probability mass on the four rectangle corners. For distinct physical ports subject to minimum spacing, the bound is generally not attained; it is approached by balanced clusters near the four corners.
\end{theorem}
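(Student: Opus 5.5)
We first compute the single-source FIM from the conditional FIM~\eqref{eq:fim_conditional} with $L=1$, $\mathbf P=P$ a scalar. The projector onto the complement of $\mathbf a$ is
\[
\bm\Pi_{\mathbf a}^\perp=\mathbf I_N-\mathbf a\mathbf a^H/N,
\]
since $\|\mathbf a\|_2^2=N$. Using the explicit derivatives $\mathbf d_u=jk\,\diag(x_n)\mathbf a$ and $\mathbf d_v=jk\,\diag(y_n)\mathbf a$, the unit-modulus entries of $\mathbf a$ give
\[
\mathbf d_u^H\mathbf d_v=k^2\sum_n x_ny_n,\qquad \mathbf a^H\mathbf d_u=jk\sum_n x_n,
\]
and similarly for $v$. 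Hence
\[
\mathbf d_p^H\bm\Pi_{\mathbf a}^\perp\mathbf d_q=k^2\Bigl(\sum_n r_{n,p}r_{n,q}-\tfrac1N\sum_n r_{n,p}\sum_n r_{n,q}\Bigr)=k^2N[\bm\Sigma_r]_{pq}.
\]
This quantity is real, so
\[
\mathbf J_{uv}=\frac{2TPk^2N}{\sigma^2}\bm\Sigma_r,
\]
which proves the proportionality claim. The constant factor does not affect the maximizer of $\log\det$. It also shows translation invariance: the centroid $\bar{\mathbf r}$ drops out.

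For the relaxed bound, we replace the empirical measure by an arbitrary probability measure $\mu$ on $\Omega$ and let $\bm\Sigma_\mu$ be its covariance. Hadamard's inequality for the positive semidefinite $2\times2$ matrix $\bm\Sigma_\mu$ gives
\[
\det\bm\Sigma_\mu\le[\bm\Sigma_\mu]_{11}[\bm\Sigma_\mu]_{22}.
\]
Each marginal is a random variable supported on an interval of length $D_x$ (respectively $D_y$), and Popoviciu's inequality bounds its variance by $D_x^2/4$ (respectively $D_y^2/4$). This yields~\eqref{eq:single_det_bound}.

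For equality, Popoviciu's bound is tight only for the two-point equal mass at the interval endpoints, so both marginals must put mass $1/2$ on $\{0,D_x\}$ and on $\{0,D_y\}$. Hadamard's inequality is tight only when the covariance is diagonal. Writing the mass on corner $(0,0)$ as $p$, the marginal constraints force the corner masses to be $p,\tfrac12-p,\tfrac12-p,p$. A direct calculation gives covariance $D_xD_y(p-\tfrac14)$, so zero covariance forces $p=1/4$, i.e.\ uniform mass on the four corners; conversely this measure attains the bound. The main care point is this equality characterization: Hadamard alone permits correlated corner measures, and the marginal-plus-decorrelation argument is what pins down the uniform corner design.

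Finally, for $N$ distinct ports with minimum spacing $d_{\min}>0$, at most one port can occupy each corner, so the empirical measure cannot equal the four-corner measure unless $N=4$ and the spacing constraint is compatible. More generally, $\det\bm\Sigma_r$ is continuous in the port positions. Placing about $N/4$ ports in a tightly packed cluster at each corner makes each marginal variance approach $D^2/4$ minus an $O(\text{cluster diameter}^2)$ correction, and with balanced cluster sizes the cross-covariance approaches zero. Hence such designs approach the bound, while strict attainment generally fails. This is a qualitative statement, and continuity of the determinant suffices.
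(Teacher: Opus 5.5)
Your proof is correct and follows the same route as the paper. The projection $\mathbf I_N-\mathbf a\mathbf a^H/N$ centers the coordinates, giving $\mathbf J_{uv}=(2TPk^2N/\sigma^2)\bm\Sigma_r$; Popoviciu plus Hadamard then give the bound, and the uniform four-corner measure attains it. Your marginal-plus-decorrelation step goes beyond the paper by showing this design is the unique maximizer, which also makes the ``not attained for distinct ports, $N\neq4$'' claim rigorous. One small slip: for clusters of diameter $\delta$ confined to $\Omega$, the loss in each marginal variance is first order, roughly $O(D\delta)$, not $O(\delta^2)$, because only ports exactly at corners avoid an inward shift. This does not affect your continuity conclusion that the relative loss vanishes as $\delta/D\to0$.
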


\begin{proof}
For one source, after removing the unknown complex amplitude, the spatial-frequency derivatives are $jkx_n a_n$ and $jky_n a_n$. Projection onto the orthogonal complement of the steering vector removes the common phase term, leaving only centered coordinates. Hence, the geometry-dependent FIM is $cN\bm{\Sigma}_r$ for a positive scalar $c$ determined by SNR, snapshots, and source power. Popoviciu's inequality gives $\operatorname{var}(x)\leq D_x^2/4$ and $\operatorname{var}(y)\leq D_y^2/4$, while Hadamard's inequality gives $\det\bm{\Sigma}_r\leq\operatorname{var}(x)\operatorname{var}(y)$. Equal corner mass has mean $(D_x/2,D_y/2)$, covariance $\diag(D_x^2/4,D_y^2/4)$, and zero cross-covariance, and thus, it attains the relaxed bound.
\end{proof}

The relaxed bound is an information upper bound, but a physical FAS has finitely many distinct ports and additional spacing and reachability constraints. The following certificate applies directly to that feasible set.

\begin{theorem}[Finite-port partial-mobility certificate]\label{thm:finite_retention}
Let $\mathcal F_M$ be the nonempty compact set of refinement layouts reachable when at most $M$ of the $N$ ports may move, including the box, movement-budget, and minimum-spacing constraints. Define
\begin{align}
 V_M&=\max_{\mathbf R\in\mathcal F_M}\det\bm\Sigma_r(\mathbf R),\\
 W_{p,M}&=\max_{\mathbf R\in\mathcal F_M,n}p_n-
           \min_{\mathbf R\in\mathcal F_M,n}p_n,\quad p\in\{x,y\},\\
 U_M&=W_{x,M}^2W_{y,M}^2/16.
\end{align}
For any feasible computed layout $\widehat{\mathbf R}_M\in\mathcal F_M$, let $L_M=\det\bm\Sigma_r(\widehat{\mathbf R}_M)$. Then
\begin{equation}\label{eq:finite_sandwich}
0\leq L_M\leq V_M\leq U_M.
\end{equation}
If the actuator sets are nested, $\mathcal F_M\subseteq\mathcal F_{M+1}$, then $V_M$ is nondecreasing. Moreover, for $M<N$, $L_N>0$, and $U_N>0$, the information retained relative to the finite-port all-movable optimum obeys
\begin{equation}\label{eq:finite_ratio_certificate}
 \frac{L_M}{U_N}
 \leq \frac{V_M}{V_N}
 \leq \min\!\left\{1,\frac{U_M}{L_N}\right\}.
\end{equation}
For a fixed seed and free-port subset, their exact finite-sample covariance decomposition is
\begin{equation}\label{eq:covariance_decomposition}
\bm\Sigma_r=\alpha\bm\Sigma_s+(1-\alpha)\bm\Sigma_f+
\alpha(1-\alpha)(\bm\mu_s-\bm\mu_f)(\bm\mu_s-\bm\mu_f)^T,
\end{equation}
where $\alpha=N_s/N$. Thus, an off-center reachable free-port cloud contributes an explicit between-group term rather than invalidating the certificate.
\end{theorem}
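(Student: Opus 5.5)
The plan splits the statement into four independent parts: the sandwich \eqref{eq:finite_sandwich}, monotonicity, the ratio certificate \eqref{eq:finite_ratio_certificate}, and the decomposition \eqref{eq:covariance_decomposition}. Everything rests on positive semidefiniteness, the Popoviciu/Hadamard argument of Theorem~\ref{thm:single}, and elementary comparisons of maxima.

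\textbf{The sandwich.} First, $\bm\Sigma_r(\mathbf R)$ is a sample covariance, hence positive semidefinite, so its determinant is nonnegative. This gives $L_M\geq 0$. Since $\det\bm\Sigma_r$ is a polynomial in the coordinates, it is continuous. Because $\mathcal F_M$ is nonempty and compact, $V_M$ is attained, and $L_M\leq V_M$ holds because $\widehat{\mathbf R}_M$ is feasible. For $V_M\leq U_M$, fix any $\mathbf R\in\mathcal F_M$. Every $x_n$ lies in the interval $[\min_{\mathcal F_M,n}x_n,\max_{\mathcal F_M,n}x_n]$, whose length is $W_{x,M}$, and likewise for $y$. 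Applying Popoviciu's inequality to the empirical distribution of the ports gives $\operatorname{var}(x)\leq W_{x,M}^2/4$ and $\operatorname{var}(y)\leq W_{y,M}^2/4$. Hadamard's inequality then gives $\det\bm\Sigma_r\leq W_{x,M}^2W_{y,M}^2/16=U_M$; this is exactly the argument of Theorem~\ref{thm:single} with the box replaced by the reachable bounding box. Monotonicity is immediate: a maximum over a larger set cannot decrease.

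\textbf{The ratio certificate.} For \eqref{eq:finite_ratio_certificate} I will use the sandwich twice.
\begin{itemize}
\item $V_M\geq L_M$ and $V_N\leq U_N$, so $V_M/V_N\geq L_M/U_N$. Here $V_N\geq L_N>0$, so the division is legitimate.
\item $V_M\leq U_M$ and $V_N\geq L_N$, so $V_M/V_N\leq U_M/L_N$.
\item The bound $V_M/V_N\leq 1$ needs $\mathcal F_M\subseteq\mathcal F_N$. Allowing all $N$ ports to move contains every restricted configuration, so this holds whenever nestedness holds. I will state it explicitly as the interpretation of "at most $M$ ports," which makes the sets nested by definition.
\end{itemize}
The only subtlety is keeping every denominator positive and invoking nestedness for the unit cap.

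\textbf{The covariance decomposition.} Equation \eqref{eq:covariance_decomposition} is the law of total covariance for a two-group mixture. Write $\bar{\mathbf r}=\alpha\bm\mu_s+(1-\alpha)\bm\mu_f$, where $\bm\Sigma_s$ and $\bm\Sigma_f$ are the within-group covariances normalized by $N_s$ and $N_f$. Split each centered vector as $\mathbf r_n-\bar{\mathbf r}=(\mathbf r_n-\bm\mu_g)+(\bm\mu_g-\bar{\mathbf r})$ and expand. The cross terms vanish because within-group deviations sum to zero. The remaining term is $\alpha(\bm\mu_s-\bar{\mathbf r})(\cdot)^T+(1-\alpha)(\bm\mu_f-\bar{\mathbf r})(\cdot)^T$. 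Using $\bm\mu_s-\bar{\mathbf r}=(1-\alpha)(\bm\mu_s-\bm\mu_f)$ and $\bm\mu_f-\bar{\mathbf r}=-\alpha(\bm\mu_s-\bm\mu_f)$, this collapses to $\alpha(1-\alpha)(\bm\mu_s-\bm\mu_f)(\bm\mu_s-\bm\mu_f)^T$.

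I expect no real obstacle. The step needing the most care is the bound $V_M\leq U_M$: Popoviciu must be applied to the bounding widths taken over the whole feasible set, not over a single layout, so that $U_M$ is a single computable constant.
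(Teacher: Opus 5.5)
Your proposal is correct and follows the paper's proof essentially line for line. The shared steps are: Popoviciu plus Hadamard over the reachable coordinate span for $V_M\le U_M$, feasibility for $L_M\le V_M$, set inclusion for monotonicity, chaining the four sandwich inequalities for the ratio certificate, and the within/between-group covariance identity for the decomposition. Your explicit remarks that the unit cap needs $\mathcal F_M\subseteq\mathcal F_N$ and that $V_N\ge L_N>0$ keeps the denominators positive spell out steps the paper's proof leaves implicit.
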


\begin{proof}
Every coordinate appearing in $\mathcal F_M$ lies in an interval of width $W_{p,M}$. Popoviciu's inequality and Hadamard's inequality therefore give $\det\bm\Sigma_r\leq U_M$. Feasibility of $\widehat{\mathbf R}_M$ gives $L_M\leq V_M$, proving \eqref{eq:finite_sandwich}. Set inclusion gives $V_M\leq V_{M+1}$. Combining $L_M\leq V_M$, $V_N\leq U_N$, $V_M\leq U_M$, and $L_N\leq V_N$ yields \eqref{eq:finite_ratio_certificate}. Equation~\eqref{eq:covariance_decomposition} is the standard within-group plus between-group covariance identity applied to the finite seed and free-port sets.
\end{proof}

Reachability supplies $U_M$, while any spacing-feasible returned layout supplies $L_M$; hence, physical constraints enter the certificate without assuming attainable corner collocation.

\begin{corollary}[Centered-seed relaxed retention benchmark]\label{cor:retention}
Let $\alpha=N_s/N$. Suppose the fixed seed and aperture share a center and $\bm\Sigma_s=\diag(\sigma_{s,x}^2,\sigma_{s,y}^2)$. Distributing the remaining relaxed mass equally over the corners produces
\begin{equation}
    \bm{\Sigma}_{\rm partial}
    =\alpha\bm{\Sigma}_s+(1-\alpha)
    \diag(D_x^2/4,D_y^2/4),
\end{equation}
and therefore
\begin{align}
\frac{\det\bm{\Sigma}_{\rm partial}}{D_x^2D_y^2/16}
={}&\left(1-\alpha+\frac{4\alpha\sigma_{s,x}^2}{D_x^2}\right)
\left(1-\alpha+\frac{4\alpha\sigma_{s,y}^2}{D_y^2}\right).
\label{eq:partial_retention}
\end{align}
As $D_x,D_y\rightarrow\infty$ with a fixed compact seed, the normalized determinant tends to $(1-\alpha)^2=(N_f/N)^2$.
\end{corollary}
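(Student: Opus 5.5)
The plan is to obtain the corollary directly from the within-plus-between-group covariance identity \eqref{eq:covariance_decomposition} of Theorem~\ref{thm:finite_retention}, read as a statement about relaxed probability measures: a mixture of the seed measure (mass $\alpha$) and the free-port measure (mass $1-\alpha$). First, I will model the free-port part as the relaxed design that puts mass $(1-\alpha)/4$ on each of the four corners of $\Omega$. Its mean is $\bm\mu_f=(D_x/2,D_y/2)^T$, and as shown in the proof of Theorem~\ref{thm:single}, its covariance is $\bm\Sigma_f=\diag(D_x^2/4,D_y^2/4)$ with zero cross term. The hypothesis that the seed and aperture share a center gives $\bm\mu_s=(D_x/2,D_y/2)^T=\bm\mu_f$, so the between-group term $\alpha(1-\alpha)(\bm\mu_s-\bm\mu_f)(\bm\mu_s-\bm\mu_f)^T$ vanishes. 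Then \eqref{eq:covariance_decomposition} yields the stated $\bm\Sigma_{\rm partial}$.

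Second, because $\bm\Sigma_s$ is diagonal by assumption, $\bm\Sigma_{\rm partial}$ is diagonal with entries $\alpha\sigma_{s,x}^2+(1-\alpha)D_x^2/4$ and $\alpha\sigma_{s,y}^2+(1-\alpha)D_y^2/4$. Its determinant is the product of these two entries. Dividing by $D_x^2D_y^2/16$, which is the relaxed optimum of Theorem~\ref{thm:single}, amounts to dividing the first factor by $D_x^2/4$ and the second by $D_y^2/4$. This gives exactly \eqref{eq:partial_retention}.

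Third, for the limit, a fixed compact seed patch $\mathcal S$ has bounded variances $\sigma_{s,x}^2,\sigma_{s,y}^2$ that do not depend on $D_x,D_y$. Hence $4\alpha\sigma_{s,x}^2/D_x^2\to0$ and $4\alpha\sigma_{s,y}^2/D_y^2\to0$, and the ratio tends to $(1-\alpha)^2=(N_f/N)^2$, using $1-\alpha=(N-N_s)/N$.

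The only step I expect to need care is justifying the use of \eqref{eq:covariance_decomposition} with relaxed (non-atomic, corner-mass) free-port measures rather than finite port sets. I would address this by noting that the identity is simply the law of total variance for a two-component mixture with weights $\alpha$ and $1-\alpha$, and so it holds for arbitrary probability measures. I would also check that "sharing a center" is interpreted as the seed mean coinciding with the rectangle center, which is what forces the cross term to vanish.
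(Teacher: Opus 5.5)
Your proposal is correct and follows the paper's argument. Equal centroids eliminate the between-group term of the mixture covariance, which leaves $\alpha\bm\Sigma_s+(1-\alpha)\diag(D_x^2/4,D_y^2/4)$; its diagonal determinant normalizes factor by factor to \eqref{eq:partial_retention}, and the normalized seed variances vanish as the aperture grows. Your explicit appeal to the law of total variance for a two-component mixture justifies the step the paper states in one line.
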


\begin{proof}
The seed and balanced corner design have the same centroid, so no between-group covariance term appears. Their mixture covariance is the weighted sum in the statement. Taking its determinant gives~\eqref{eq:partial_retention}; normalized seed variances vanish as the aperture expands.
\end{proof}

Corollary~\ref{cor:retention} is an analytic aperture-scaling benchmark, not a finite-port optimum; Theorem~\ref{thm:finite_retention} supplies the physically feasible certificate.
The factor $(N_f/N)^2$ denotes the asymptotic determinant retention of the relaxed model, rather than a direct CRB ratio or finite-port guarantee.

\begin{remark}
Theorem~\ref{thm:single} explains the tendency toward corners. Finite spacing, multi-source conditioning, movement cost, and robustness prevent all physical ports from concentrating at the boundary.
\end{remark}

A planar array still has at most $N(N-1)+1$ distinct ordered differences. Reconfiguration does not remove this counting limit; it uses the available physical area to improve the FIM while retaining a small local lag patch for basin control. Because the partial-mobility feasible set is contained in the all-movable set, the latter is an information upper bound. The relevant hardware metric is information retained per actuated port.

\subsection{Multi-Source Conditioning}

With multiple sources, aperture alone is insufficient: nearby steering and derivative subspaces can become nearly collinear. Define
\begin{equation}
    \mathbf{G}(\bm{\eta};\mathbf{R})
    =
    \frac{1}{N}\mathbf{A}^H(\bm{\eta})\mathbf{A}(\bm{\eta}) ,
\end{equation}
as the normalized steering Gram matrix. A small $\lambda_{\min}(\mathbf G)$ reduces identifiability even when the coordinate covariance is large.

\begin{proposition}[Aperture and separation tradeoff]\label{prop:conditioning}
For two equal-power sources separated by $\Delta\bm\eta$, define
\begin{equation}
    \gamma(\Delta\bm{\eta})
    =
    \frac{1}{N}\sum_{n=1}^N
    \exp\!\left(j\frac{2\pi}{\lambda}\mathbf{r}_n^T\Delta\bm{\eta}\right).
\end{equation}
The normalized two-source Gram matrix has eigenvalues $1\pm|\gamma(\Delta\bm\eta)|$. Aperture growth is therefore useful only when the layout also suppresses $|\gamma|$ over the relevant separations.
\end{proposition}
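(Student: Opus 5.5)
The plan is to write the two steering vectors explicitly and compute the Gram matrix entrywise. With $\mathbf{A}=[\mathbf{a}(\bm\eta_1),\mathbf{a}(\bm\eta_2)]$ and the unit-modulus entries of \eqref{eq:steering2d}, the diagonal entries of $\mathbf{G}=\frac1N\mathbf{A}^H\mathbf{A}$ are $\frac1N\|\mathbf{a}_\ell\|_2^2=1$. The off-diagonal entry is
\begin{equation*}
[\mathbf G]_{12}=\frac1N\sum_{n=1}^N \exp\!\left(j\frac{2\pi}{\lambda}\mathbf r_n^T(\bm\eta_2-\bm\eta_1)\right)=\gamma(\Delta\bm\eta),
\end{equation*}
where we set $\Delta\bm\eta=\bm\eta_2-\bm\eta_1$. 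Hermitian symmetry then gives $[\mathbf G]_{21}=\gamma^*$. The sign convention for $\Delta\bm\eta$ does not matter, because reversing it conjugates $\gamma$ and leaves $|\gamma|$ unchanged.

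Next, we compute the spectrum of the $2\times2$ Hermitian matrix $\mathbf G=\begin{bmatrix}1&\gamma\\ \gamma^*&1\end{bmatrix}$. Its characteristic polynomial is $(1-\mu)^2-|\gamma|^2$, whose roots are $\mu=1\pm|\gamma|$. Writing $\gamma=|\gamma|e^{j\phi}$, the corresponding eigenvectors are $[1,\pm e^{-j\phi}]^T/\sqrt2$. In particular, $\lambda_{\min}(\mathbf G)=1-|\gamma(\Delta\bm\eta)|$. The triangle inequality gives $0\le 1-|\gamma|\le1$, so $\mathbf G$ is singular exactly when all phases $\frac{2\pi}{\lambda}\mathbf r_n^T\Delta\bm\eta$ coincide modulo $2\pi$.

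The interpretive claim has to be argued separately. We first note that $\gamma$ is an empirical characteristic function of the port locations, evaluated at the frequency $\Delta\bm\eta/\lambda$. Scaling the layout or increasing $\det\bm\Sigma_r$, which Theorem~\ref{thm:single} and Theorem~\ref{thm:finite_retention} reward, does not by itself bound $|\gamma|$. To make this concrete, we give a counterexample. Take the equal-corner design of Theorem~\ref{thm:single} with $D_x$ a multiple of $\lambda/\Delta u$, $D_y$ a multiple of $\lambda/\Delta v$, and $\Delta\bm\eta=[\Delta u,\Delta v]^T$. Then every phase is a multiple of $2\pi$, so $|\gamma|=1$ and $\lambda_{\min}(\mathbf G)=0$, even though the coordinate covariance attains its maximum. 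This shows that aperture growth helps identifiability only if $|\gamma|$ stays small over the relevant set of separations.

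The computation itself is elementary. The main obstacle is stating the "therefore" clause rigorously rather than only heuristically. I would do this in two parts. First, I would note that the projector $\bm\Pi^\perp_{\mathbf A}$ in \eqref{eq:fim_conditional} involves $(\mathbf A^H\mathbf A)^{\dagger}=(N\mathbf G)^{\dagger}$, so the nuisance elimination is ill-conditioned as $\lambda_{\min}(\mathbf G)\to0$. Second, I would present the corner counterexample above as the formal evidence that large $\det\bm\Sigma_r$ is compatible with $|\gamma|=1$.
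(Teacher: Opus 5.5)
Your proposal is correct and takes essentially the same approach as the paper: both use the unit-modulus steering entries to reduce $\mathbf G$ to $\begin{bmatrix}1&\gamma\\ \gamma^*&1\end{bmatrix}$ and read off the eigenvalues $1\pm|\gamma(\Delta\bm\eta)|$, which you obtain from the characteristic polynomial. For the interpretive ``only when'' clause, the paper offers only the heuristic remark that the conditional FIM becomes ill-conditioned as $|\gamma|\to1$. Your corner-lattice counterexample, which gives $|\gamma|=1$ while $\det\bm\Sigma_r$ is maximal, is a useful sharpening of that clause.
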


\begin{proof}
After column normalization, the Gram matrix equals
$\begin{bmatrix}1&\gamma\\ \gamma^*&1\end{bmatrix}$. Direct eigendecomposition gives $1\pm|\gamma|$; as $|\gamma|\rightarrow1$, the derivative subspaces and the conditional FIM become ill-conditioned.
\end{proof}

This tradeoff pulls some free ports from the corners toward edges or the interior, where they reduce average correlation over the design sector.

\subsection{Global-Basin Control for Large Apertures}

A large FAS aperture provides high Fisher information because the steering phase changes rapidly with $\bm{\eta}$. The same phase sensitivity narrows the main lobes and introduces additional sidelobe structure in MUSIC and ML objectives. If the largest port separation along a direction $\mathbf{q}$ is $D_{\mathbf{q}}$, the local peak spacing is on the order of $\lambda/D_{\mathbf{q}}$. The fixed seed prevents exact visible-sector aliases in the ideal manifold, but finite-snapshot perturbations and multi-source peak merging can still make unrestricted initialization fragile.

The seed patch changes this ambiguity structure. Its half-wavelength spacing makes the steering map essentially unambiguous over the intended visible sector, but its aperture is too small for high final accuracy. The proposed estimator uses the seed primarily for basin selection and retains its likelihood contribution when jointly refining both states. This mechanism combines stable initialization with large-aperture precision.

\begin{proposition}[Wide-sector search and local high-resolution refinement]\label{prop:wide_search}
Consider a rectangular seed patch containing at least two positions along each coordinate, with spacing $d_0=\lambda/2$ in both planar directions. For spatial frequencies in the open visible disk $\mathcal{V}=\{(u,v):u^2+v^2<1\}$, two distinct directions cannot produce identical adjacent phase progressions on the seed patch, even after the unknown source phase is absorbed into a complex amplitude. Hence, the seed state preserves the ambiguity-free scan region of a compact UPA. After the seed selects a local basin, the refinement state can use larger nonuniform baselines to reduce the local CRB without requiring a global large-aperture search.
\end{proposition}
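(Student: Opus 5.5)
The argument reduces the identifiability claim to a one-dimensional statement about phase increments along each lattice axis of the seed patch. I would fix a seed port at $\mathbf r_0$ and consider its neighbours $\mathbf r_0+d_0\mathbf e_x$ and $\mathbf r_0+d_0\mathbf e_y$, which exist because the patch has at least two positions along each coordinate. For a direction $\bm\eta=(u,v)$, the steering entries in \eqref{eq:steering2d} give the adjacent phase ratios
\begin{equation*}
\frac{[\mathbf a(\bm\eta)]_{\mathbf r_0+d_0\mathbf e_x}}{[\mathbf a(\bm\eta)]_{\mathbf r_0}}=e^{j\pi u},\qquad
\frac{[\mathbf a(\bm\eta)]_{\mathbf r_0+d_0\mathbf e_y}}{[\mathbf a(\bm\eta)]_{\mathbf r_0}}=e^{j\pi v},
\end{equation*}
since $kd_0=\pi$. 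These ratios are unchanged when the vector is multiplied by an unknown complex amplitude $c\neq0$. Therefore, if $c\,\mathbf a_{\mathcal S}(\bm\eta)=c'\mathbf a_{\mathcal S}(\bm\eta')$ on the seed, the adjacent progressions must coincide: $e^{j\pi u}=e^{j\pi u'}$ and $e^{j\pi v}=e^{j\pi v'}$. Equivalently, $u-u'\in2\mathbb Z$ and $v-v'\in2\mathbb Z$.

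The key step is to rule out nonzero integer offsets inside the open visible disk. Any point of $\mathcal V$ satisfies $|u|<1$, so for two points of $\mathcal V$ we get $|u-u'|<2$. The only element of $2\mathbb Z$ in $(-2,2)$ is $0$, so $u=u'$, and the same argument gives $v=v'$. Openness of the disk is used exactly here: on the closed boundary, the pair $u=\pm1$ would produce a genuine endfire alias. Hence the map $\bm\eta\mapsto\mathbf a_{\mathcal S}(\bm\eta)$, taken up to scale, is injective on $\mathcal V$, which is precisely the ambiguity-free scan region of a compact UPA with spacing $d_0$.

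For the refinement half, the claim is qualitative and I would argue it from earlier results. Proposition~\ref{prop:adaptive_fim} and \eqref{eq:multistate_fim} show that the frame information is $\mathbf J^{(0)}+\mathbf J^{(1)}\succeq\mathbf J^{(0)}$, so adding the refinement state cannot raise the local CRB. Theorem~\ref{thm:single} and the covariance decomposition \eqref{eq:covariance_decomposition} show that spreading the free ports increases $\det\bm\Sigma_r$, and hence the single-source FIM. Global identification is already supplied by the seed term, so the large-baseline refinement only needs to be evaluated locally inside the selected basin. Its sidelobes at spacing of order $\lambda/D_{\mathbf q}$ then need not be searched globally.

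I expect the main obstacle to be stating precisely what ``ambiguity-free'' means. Injectivity of the seed manifold handles a single source in the noiseless case. For multiple sources, the analogous claim would require rank or Kruskal-type conditions on the seed steering matrix, and those depend on $N_s$. I would therefore restrict the proposition to pairwise distinctness of single steering vectors up to scale, which is what the statement asserts. I would also note that basin reliability under noise is a separate issue, governed by the seed SNR, and is outside this noiseless identifiability argument.
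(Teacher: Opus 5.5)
Your proposal is correct and follows essentially the same route as the paper: the adjacent phase ratios are invariant under complex scaling, which forces $u-u'\in2\mathbb Z$ and $v-v'\in2\mathbb Z$, and the open visible disk gives $|u-u'|,|v-v'|<2$, so both offsets vanish. For the qualitative refinement claim, the paper appeals only to Theorem~\ref{thm:single} and Proposition~\ref{prop:conditioning}. Your additional use of the additive frame FIM $\mathbf J^{(0)}+\mathbf J^{(1)}\succeq\mathbf J^{(0)}$ is a harmless and slightly more explicit variant.
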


\begin{proof}
Equality of the steering vectors up to an unknown complex scalar requires equality of their adjacent phase progressions. Thus, $k d_0 \Delta u=2\pi p$ and $k d_0 \Delta v=2\pi q$ for integers $p$ and $q$, where $\Delta u$ and $\Delta v$ are the spatial-frequency differences. Since $k d_0=\pi$, this implies $\Delta u=2p$ and $\Delta v=2q$. In the open visible region, $|\Delta u|<2$ and $|\Delta v|<2$, so the only feasible integers are $p=q=0$. The two directions are therefore identical. The CRB reduction after refinement then follows locally from the coordinate-covariance aperture law in Theorem~\ref{thm:single} and its multi-source extension through Proposition~\ref{prop:conditioning}.
\end{proof}


\section{Movement-Constrained FAS Reconfiguration and Estimation}\label{sec:opt}

\subsection{Seeded Reconfiguration Protocol}

A fully movable large-aperture FAS is attractive from an information perspective but poses initialization and hardware challenges. We therefore operate the FAS in two states. In the seed state, $N_s$ calibrated ports form a compact Nyquist patch and the free ports remain in known standby positions. In the refinement state, only the free fluid ports move or switch to optimized positions, while the seed ports remain fixed as a calibrated local reference. The seed set is
\begin{equation}
    \mathcal{S}
    =
    \{(x_0+m_xd_0,y_0+m_yd_0):m_x,m_y=0,\ldots,M_s-1\},
\end{equation}
where $N_s=M_s^2$. These seed ports are not a separate array. They are FAS positions intentionally kept in a local half-wavelength patch during sensing so that the receiver has an ambiguity-controlled reference manifold before the free ports move. The free ports start from standby coordinates $\mathbf{r}_n^{(0)}$ and move to refinement coordinates $\mathbf{r}_n^{(1)}$ inside $\Omega$.

This architecture assigns the two parts of the FAS different roles:
\begin{itemize}
    \item the seed state supplies a local, Nyquist-sampled 2-D manifold for coarse MUSIC and basin selection;
    \item the moved free fluid ports supply the large coordinate covariance needed to reduce the CRB during final refinement.
\end{itemize}
The seed mitigates ambiguity, whereas free-port reconfiguration improves local estimation precision.

This partial-reconfiguration structure should not be interpreted as a claim that fixing seed ports always gives a lower CRB than moving all ports. Let $\mathcal{R}_M$ be the set of refinement geometries that can be reached when at most $M$ ports are actuated after the seed observation. If $M_1<M_2$, then $\mathcal{R}_{M_1}\subseteq\mathcal{R}_{M_2}$, and the best achievable CRB under the larger feasible set cannot be worse. A fully movable FAS therefore defines the information upper bound. Fixed seed ports instead provide a half-wavelength wide-sector reference, reduce calibration and settling overhead, and retain a stable reference geometry for local ML. The design objective is a hardware-aware accuracy--cost tradeoff, not an attempt to outperform an unconstrained all-movable array.

\subsection{FIM-Informed D-Optimal Movement-Regularized Reconfiguration}

Let $\mathcal Q=\{\bm\eta^{(q)}\}_{q=1}^Q$ sample a seed-derived confidence region or slow-time prior. It cannot use unknown true directions except in an oracle benchmark. For refinement matrix $\mathbf R^{(1)}$, define the frame-level criterion
\begin{equation}
    \mathcal{L}_{\rm D}(\mathbf{R}^{(1)})
    =
    \frac{1}{Q}\sum_{q=1}^Q
    \log\det\!\left(
        \mathbf{J}_{uv}^{\rm frame}(\bm{\eta}^{(q)};\mathbf{R}^{(0)},\mathbf{R}^{(1)})+\epsilon\mathbf{I}
    \right).
\end{equation}
Although seed information is independent of $\mathbf R^{(1)}$, it enters the log-determinant jointly with refinement information and cannot be dropped. The controller solves
\begin{equation}\label{eq:main_opt}
\begin{aligned}
    \max_{\mathbf{R}^{(1)}} \quad
        & \mathcal{L}_{\rm D}(\mathbf{R}^{(1)})
        + \rho\Phi_{\rm ca}(\mathbf{R}^{(1)})
        -\mu\Psi_{\rm sp}(\mathbf{R}^{(1)})\\
        &-\chi\Psi_{\rm mv}(\mathbf{R}^{(1)};\mathbf{R}^{(0)})\\
    \text{s.t.}\quad
        & \mathbf{r}_n^{(1)}\in\Omega,\quad n=N_s+1,\ldots,N,\\
        & \{\mathbf{r}_1^{(1)},\ldots,\mathbf{r}_{N_s}^{(1)}\}=\mathcal{S},\\
        & C_{\rm mv}(\mathbf{R}^{(1)};\mathbf{R}^{(0)})
          \leq B_{\rm mv}.
\end{aligned}
\end{equation}
The first two terms reward frame information and local lag support, while the others penalize spacing violations and movement. The movement penalty is
\begin{equation}
    \Psi_{\rm mv}(\mathbf{R}^{(1)};\mathbf{R}^{(0)})
    =
    \sum_{n=N_s+1}^{N}
    \|\mathbf{r}_n^{(1)}-\mathbf{r}_n^{(0)}\|_2^2.
\end{equation}
With a large $B_{\rm mv}$ and $\chi=0$, the design is information driven; with $B_{\rm mv}=0$, it is frozen.

To avoid evaluating the exact FIM throughout every multi-start search, candidates are generated using
\begin{equation}\label{eq:fim_surrogate}
\widetilde{\mathcal{L}}_{\rm D}(\mathbf{R})
=\log\det(\bm{\Sigma}_r+\epsilon\mathbf{I}_2)
-\frac{\kappa}{Q}\sum_{q=1}^{Q}
\sum_{\ell<m}
\left|\frac{\mathbf{a}_\ell^H\mathbf{a}_m}{N}\right|^2,
\end{equation}
which combines the aperture law with a multi-source correlation penalty. Every candidate is then rescored by the exact multistate objective, and the best candidate is used to initialize local refinement of~\eqref{eq:main_opt}. Reported layouts and CRBs therefore use the matched FIM, not the surrogate.

The smooth 2-D coarray coverage metric is
\begin{align}
    \Phi_{\rm ca}(\mathbf{R})
    &=
    \frac{1}{|\mathcal{G}_0|}\sum_{\mathbf{m}\in\mathcal{G}_0}
    \log\!\left(S_{\mathbf{m}}+\epsilon_c\right),
    \label{eq:coarray_reg}\\
    S_{\mathbf{m}}
    &=\sum_{i,j}\exp\!\left[
    -\frac{\|\mathbf{r}_i-\mathbf{r}_j-d_0\mathbf{m}\|_2^2}{2\tau^2}
    \right],
\end{align}
where $\mathcal G_0$ is a small target Nyquist patch. The logarithmic soft count rewards coverage without repeatedly rewarding the same lag and remains differentiable for continuous positions. It serves as a geometric regularizer for local MUSIC candidate generation, while the reported CRBs and final concentrated-likelihood estimates use the physical array
manifold.

The spacing penalty is
\begin{equation}
    \Psi_{\rm sp}(\mathbf{R})
    =
    \sum_{i<j}\left[d_{\min}-\|\mathbf{r}_i-\mathbf{r}_j\|_2\right]_+^2.
\end{equation}
Box-only cases use multi-start limited-memory BFGS with box constraints (L-BFGS-B)~\cite{nocedal}; an active movement constraint is handled by sequential least-squares quadratic programming (SLSQP). Projected-gradient or sequential convex approximation can also optimize the smooth surrogate~\cite{boyd_convex}.

The implementation initializes corner, edge, and interior layouts, searches with the surrogate, ranks with the exact objective, and refines the best candidate. A returned layout must pass movement and spacing checks; repaired layouts are rescored.

\subsection{Gradient Structure and Feasibility Handling}

For any movable coordinate $z\in\{x_n^{(1)},y_n^{(1)}\}$, the log-determinant derivative is
\begin{equation}\label{eq:logdet_grad}
    \frac{\partial}{\partial z}
    \log\det(\mathbf{J}_{uv}^{\rm frame}+\epsilon\mathbf{I})
    =
    \tr\!\left[
    (\mathbf{J}_{uv}^{\rm frame}+\epsilon\mathbf{I})^{-1}
    \frac{\partial\mathbf{J}_{uv}^{\rm frame}}{\partial z}
    \right].
\end{equation}
Only the refinement-state FIM depends on $z$. The implementation uses finite differences for candidate generation and exact-objective refinement;~\eqref{eq:logdet_grad} also permits analytic derivatives.

The remaining terms are differentiable, while box and movement constraints are imposed by the solver. Spacing is penalized during search and checked afterward. Vanishing projected-gradient or KKT residuals certify only local first-order stationarity~\cite{nocedal}; the problem remains nonconvex.

\begin{proposition}[Stationarity of exact local refinement]\label{prop:stationarity}
Assume that the steering matrix has constant column rank in the local design region, $\epsilon>0$, accepted ports do not coincide, and the active constraints satisfy a standard constraint qualification. If the exact-objective refinement of~\eqref{eq:main_opt} uses a line-search sequential quadratic programming (SQP) or projected method with sufficient ascent and exact or asymptotically accurate finite-difference derivatives, every accumulation point is a first-order KKT point of the local constrained problem.
\end{proposition}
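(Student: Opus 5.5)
The plan is to reduce Proposition~\ref{prop:stationarity} to the standard global convergence theory for line-search SQP and projected-gradient methods applied to a smooth, nonconvex, constrained problem, as in~\cite{nocedal}. The work is mainly in checking that the objective of~\eqref{eq:main_opt} and its feasible set satisfy that theory's hypotheses locally.

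\emph{Smoothness of the objective.} In the local design region, every term of~\eqref{eq:main_opt} should be continuously differentiable with a locally Lipschitz gradient.
\begin{itemize}
\item For the information term, the entries of $\mathbf J^{(1)}_{uv}$ in~\eqref{eq:fim_conditional} are built from $\mathbf a_\ell$ and $\mathbf d_{\ell,p}$, which are analytic in the coordinates by~\eqref{eq:steering2d}, and from the projector $\bm\Pi^\perp_{\mathbf A}$. The pseudo-inverse $(\mathbf A^H\mathbf A)^\dagger$ is generally discontinuous. Under the constant-column-rank assumption, however, the orthogonal projector onto the range of $\mathbf A$ is real-analytic in $\mathbf R^{(1)}$. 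This is the classical Golub--Pereyra result on differentiating pseudo-inverses of constant rank.
\item The seed FIM is constant in $\mathbf R^{(1)}$. The matrix $\mathbf J^{\rm frame}_{uv}+\epsilon\mathbf I$ is positive definite because $\epsilon>0$ and the FIM is positive semidefinite. Hence $\log\det$ is smooth, with the gradient given by~\eqref{eq:logdet_grad}, and averaging over the $Q$ samples preserves this.
\item $\Phi_{\rm ca}$ is a finite sum of logarithms of Gaussian sums shifted by $\epsilon_c>0$, so it is smooth.
\item $\Psi_{\rm mv}$ is quadratic.
\item $\Psi_{\rm sp}$ uses $[\cdot]_+^2$, which is $C^1$ with Lipschitz derivative. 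The inner distance $\|\mathbf r_i-\mathbf r_j\|_2$ is smooth away from coincident ports, and the non-coincidence assumption guarantees exactly this.
\end{itemize}

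\emph{Constraints.} The box constraints are linear. The seed constraint fixes variables and is eliminated. The movement constraint $C_{\rm mv}\le B_{\rm mv}$ is a sum of norms, which is nondifferentiable where a port does not move. I would treat this either by working where active ports have nonzero displacement, or by the standard epigraph reformulation with auxiliary variables $t_n\ge\|\mathbf r^{(1)}_n-\mathbf r^{(0)}_n\|_2$. In the epigraph form the constraint becomes smooth convex (second-order cone) constraints, handled via squared form. The assumed constraint qualification (LICQ or MFCQ) then makes KKT multipliers exist at local maximizers and bounded near accumulation points.

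\emph{Convergence argument.} With sufficient ascent (an Armijo condition on a merit function), monotone ascent on the compact feasible set gives bounded iterates and convergent objective values. By the Zoutendijk-type argument, the step-generating quantities vanish along subsequences, namely the projected-gradient norm or the SQP search direction with bounded multipliers. Continuity of the gradient and the constraint qualification then pass the KKT conditions to the limit. For finite differences, I would assume the gradient error $e_k\to0$; the perturbed ascent condition still forces the true stationarity residual to vanish at accumulation points.

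\emph{Main obstacle.} The hardest step is justifying smoothness of the projected FIM, together with the movement-norm kink and the possibility that an accumulation point leaves the region where the assumptions hold. I would handle this by stating the result strictly locally, i.e.\ for accumulation points inside the region of constant rank and non-coincidence. The regularity of the projector under constant rank then carries the argument.
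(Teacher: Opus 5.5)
Your proposal follows the same route as the paper's proof: continuous differentiability of the objective on a compact set (from constant column rank, $\epsilon>0$, and non-coincident ports), standard line-search SQP/projected convergence driving the first-order residual to zero along convergent subsequences, and the constraint qualification supplying the KKT multipliers. You check it term by term, and you also catch the kink of $C_{\rm mv}$ at zero displacement, which the paper's proof never mentions. One correction to that extra step: the epigraph form is not both smooth and convex, because the cone constraint $\|\mathbf r_n^{(1)}-\mathbf r_n^{(0)}\|_2\le t_n$ is nonsmooth at its apex while the squared form has a vanishing gradient there, so LICQ and MFCQ fail precisely for free ports that do not move. You should therefore rely on your first option, or, for the projected method, on convexity of the feasible set with the standby layout as a Slater point when $B_{\rm mv}>0$, which gives the KKT conditions in subdifferential form.
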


\begin{proof}
The stated assumptions make the objective continuously differentiable on a compact level set. Standard line-search convergence makes the first-order residual vanish along convergent subsequences, and constraint qualification gives the KKT system~\cite{nocedal}.
\end{proof}

This local result gives no global guarantee; the multi-start surrogate only provides candidate initializations.

\begin{remark}[Layout-design complexity]\label{rem:layout_complexity}
Let $K$ be the number of starts, $I_s$ and $I_e$ the surrogate and exact iterations, and $|\mathcal G_0|$ the number of target lags. Surrogate and exact-FIM evaluations cost $O(QNL^2+N^2|\mathcal G_0|)$ and $O(Q(NL^2+L^3))$, respectively. Finite-difference refinement therefore costs
\begin{align}
O\big(&KI_s[QNL^2+N^2|\mathcal G_0|]+KQ[NL^2+L^3]\notag\\
&+I_eN_fQ[NL^2+L^3]\big).
\end{align}
Analytic derivatives remove the $N_f$ multiplier; slowly varying layouts can be cached.
\end{remark}

\subsection{Seeded FAS-ML Estimator}

The seed observation supplies a coarse, ambiguity-controlled MUSIC estimate:
\begin{equation}
    \tilde{\bm{\eta}}_1,\ldots,\tilde{\bm{\eta}}_L
    =
    \arg\max_{\bm{\eta}\in\mathcal{U}}
    \frac{1}{\mathbf{a}_s^H(\bm{\eta})\mathbf{E}_{n,s}\mathbf{E}_{n,s}^H\mathbf{a}_s(\bm{\eta})},
\end{equation}
where $\mathbf{a}_s$ and $\mathbf{E}_{n,s}$ are the seed-patch steering vector and noise subspace.

Let $\gamma_s$ be the weakest selected peak divided by the median seed-spectrum level. Using the weakest peak makes the gate sensitive to the least reliable source, so source-conditioned movement is triggered only when all selected peaks are sufficiently prominent. Below a prescribed threshold, movement is skipped. 
Otherwise, the free ports move to $\mathbf R^{(1)\star}$ and acquire refinement data. The seed peaks define local neighborhoods, post-movement MUSIC proposes candidates within them, and the best candidate initializes concentrated ML. With $\hat{\mathbf C}_x^{(m)}=\mathbf X^{(m)}(\mathbf X^{(m)})^H/T_m$, the final estimate is
\begin{equation}\label{eq:ml}
    \hat{\bm{\eta}}
    =
    \arg\min_{\bm{\eta}\in\mathcal{B}(\tilde{\bm{\eta}},\Delta)}
    \sum_{m=0}^{1}T_m\tr\!\left[
        \left(\mathbf{I}_{N_m}-\mathbf{P}_{\mathbf{A}^{(m)}(\bm{\eta})}\right)
        \hat{\mathbf{C}}_x^{(m)}
    \right],
\end{equation}
where $\mathbf P_{\mathbf A^{(m)}}=\mathbf A^{(m)}
((\mathbf A^{(m)})^H\mathbf A^{(m)})^{\dagger}
(\mathbf A^{(m)})^H$ and
$\mathcal B(\tilde{\bm\eta},\Delta)$ is a local box around
the selected candidate, with $\Delta$ denoting the search
half-width in the direction-cosine domain. The state weights match the additive information model: the seed suppresses remote basins and the moved aperture supplies fine differential-phase information.

\begin{algorithm}[!t]
    \caption{Joint 2-D DOA Estimation after Partial-Mobility Reconfiguration}
    \label{alg:seeded_ml}
    \LinesNumbered
    \KwIn{Seed data $\mathbf{X}^{(0)}$, current layout $\mathbf{R}^{(0)}$, seed set $\mathcal{I}_s$, source count $L$}
    \KwOut{2-D DOA estimates $\hat{\bm{\eta}}_\ell=(\hat u_\ell,\hat v_\ell)$}
    Form the seed covariance and obtain $\tilde{\bm\eta}_1,\ldots,\tilde{\bm\eta}_L$ and confidence $\gamma_s$ by 2-D MUSIC\;
    \If{$\gamma_s$ is below the refinement threshold}{
        \Return the seed estimates $\tilde{\bm{\eta}}_\ell$\;
    }
    Retrieve or solve \eqref{eq:main_opt}, move free ports to $\mathbf R^{(1)\star}$, and acquire $\mathbf X^{(1)}$ after settling\;
    Generate refinement-state MUSIC candidates inside the seed-gated neighborhoods\;
    Select the candidate with the lowest concentrated-likelihood cost and minimize \eqref{eq:ml} over $\mathcal B(\tilde{\bm\eta},\Delta)$\;
    Map $\hat{\bm{\eta}}_\ell$ to azimuth/polar angle if required\;
    \Return $\hat{\bm{\eta}}_\ell$, $\ell=1,\ldots,L$\;
\end{algorithm}

\begin{remark}[Complexity]
For a $G$-point grid, seed MUSIC costs $O(N_s^3+N_s^2G)$. After covariance formation, each two-state likelihood evaluation costs $O(N^2L+NL^2+L^3)$. A small candidate set and local refinement replace exhaustive multi-source grid enumeration; slowly varying deployments can retrieve layouts from an offline sector codebook.
\end{remark}

\section{Robustness and Practical Constraints}\label{sec:robust}

The DOAs must remain approximately constant over the seed, reconfiguration, and refinement interval.
Let $T_{\rm s}$ be the duration of one snapshot
and let $T_{\rm mv}$ include the movement or switching and
subsequent settling time. The angular-coherence requirement is
then
\begin{equation}
      T_0T_{\rm s}+T_{\rm mv}+T_1T_{\rm s}
    \leq T_{\rm coh}^{\rm ang},
\end{equation}
where $T_{\rm coh}^{\rm ang}$ is the angular coherence time. The controller can use an offline sector codebook when online optimization or reconfiguration cannot meet the latency requirement.
Minimum spacing is handled by the penalty and acceptance check in Section~\ref{sec:opt}.

For implemented positions $\mathbf r_n+\Delta\mathbf r_n$, let $\Delta\phi_n=(2\pi/\lambda)\Delta\mathbf r_n^T\bm\eta$. If independent port errors are zero-mean Gaussian with covariance $\bm\Sigma_p$, their covariance-domain coherence satisfies
\begin{equation}
    \mathbb{E}\!\left[e^{j(\Delta\phi_i-\Delta\phi_j)}\right]
    =
    \exp\!\left(
        -\left(\frac{2\pi}{\lambda}\right)^2
        \bm{\eta}^T\bm{\Sigma}_p\bm{\eta}
    \right),\qquad i\ne j.
\end{equation}
Position error therefore causes direction-dependent coherence loss; robustness can be introduced by averaging the log-determinant in~\eqref{eq:main_opt} over position-error samples. Geometry-dependent coupling is handled similarly by replacing $\mathbf a(\bm\eta)$ with a calibrated $\mathbf C(\mathbf R)\mathbf a(\bm\eta)$ in the FIM and likelihood.

\section{Simulation Results}\label{sec:sim}

The numerical study addresses four questions: whether FAS movement converts available surface area into Fisher information, whether a practical estimator can exploit the moved aperture, how partial actuation trades accuracy for movement cost, and whether reconfiguration remains useful when the source prior changes. Unless otherwise stated, $\lambda=1$, $d_0=\lambda/2$, $N=16$, $N_s=9$, and the deployment region is a square with side length $D=24d_0$. The reported SNR is the per-source ratio $P_\ell/\sigma^2$. Two equal-power sources have spatial frequencies
\begin{equation}\label{eta}
    \bm{\eta}_1=[0.18,0.10]^T,\quad
    \bm{\eta}_2=[0.29,0.17]^T.
\end{equation}
The proposed FAS uses a $3\times3$ seed and seven free ports. In every estimator comparison, the total observation budget $T_{\rm tot}$ is fixed: the FAS assigns $T_0=T_1=T_{\rm tot}/2$ snapshots to the seed and refinement states, whereas each frozen array uses all $T_{\rm tot}$ snapshots at its single geometry. The two-state ML objective and frame CRB use both FAS states, preventing a snapshot-count advantage. This comparison isolates spatial reconfiguration; under a fixed wall-clock deadline, a frozen array could collect additional samples during $T_{\rm mv}$, and the resulting latency tradeoff depends on the implementation described in Section~\ref{sec:robust}. The nonconvex placement problem is solved by multi-start surrogate search, exact conditional-FIM ranking, and exact local refinement. Unless varied explicitly, the movement budget reaches the selected refinement state. Table~\ref{tab:sim_params} summarizes the default setting.

\begin{table}[!t]
\centering
\caption{Default Simulation Parameters}
\label{tab:sim_params}
\renewcommand{\arraystretch}{1.16}
\footnotesize
\begin{tabular}{@{}p{0.42\columnwidth}p{0.48\columnwidth}@{}}
\toprule
\textbf{Parameter} & \textbf{Value} \\
\midrule
Wavelength and Nyquist spacing & $\lambda=1$, $d_0=\lambda/2$ \\
Ports and seed patch & $N=16$, $N_s=9$ ($3\times3$) \\
  RF-chain model & $N_{\rm RF}=N$ simultaneous chains; coherent switching is an extension \\
Deployment region & $24d_0\times24d_0$ square \\
  Sources & $L=2$, equal power, uncorrelated; per-source SNR \\
  Snapshot allocation & default $T_{\rm tot}=300$; $T_0=T_1=T_{\rm tot}/2$ for FAS \\
  SNR range & specified in each experiment \\
  Monte Carlo averaging & $300$ trials per point; $500$ for the SNR sweep \\
  Placement parameters & $d_{\min}=0.2\lambda$, $\rho=0.04$, $\mu=2000$, $\chi=0.006$ \\
  Coarray/search parameters & $\mathcal{G}_0=\mathcal{G}(4,4)$, $\tau=0.08\lambda$, $61\times49$ MUSIC grid \\
Baselines & UPA MUSIC/local ML, frozen sparse-grid MUSIC/local ML, random-reconfigured FAS, best frozen continuous array, direct FAS MUSIC, CRB \\
\bottomrule
\end{tabular}
\end{table}

\subsection{Reconfiguration and Basin Formation}

Figure~\ref{fig:geometry} shows the FAS movement pattern from compact standby positions to the optimized refinement state, together with the rounded 2-D difference coarray after movement. The free ports move toward corners and edges, while the seed ports remain compact and calibrated. This is the behavior predicted by the analysis. Boundary movement enlarges coordinate covariance and strengthens the dominant FIM terms, whereas the seed state preserves a local Nyquist structure for coarse basin selection. The figure therefore illustrates the central FAS mechanism: the seed mitigates ambiguity, whereas free-port movement improves local estimation precision.

\begin{figure}[h]
    \centering
    \includegraphics[width=0.99\columnwidth]{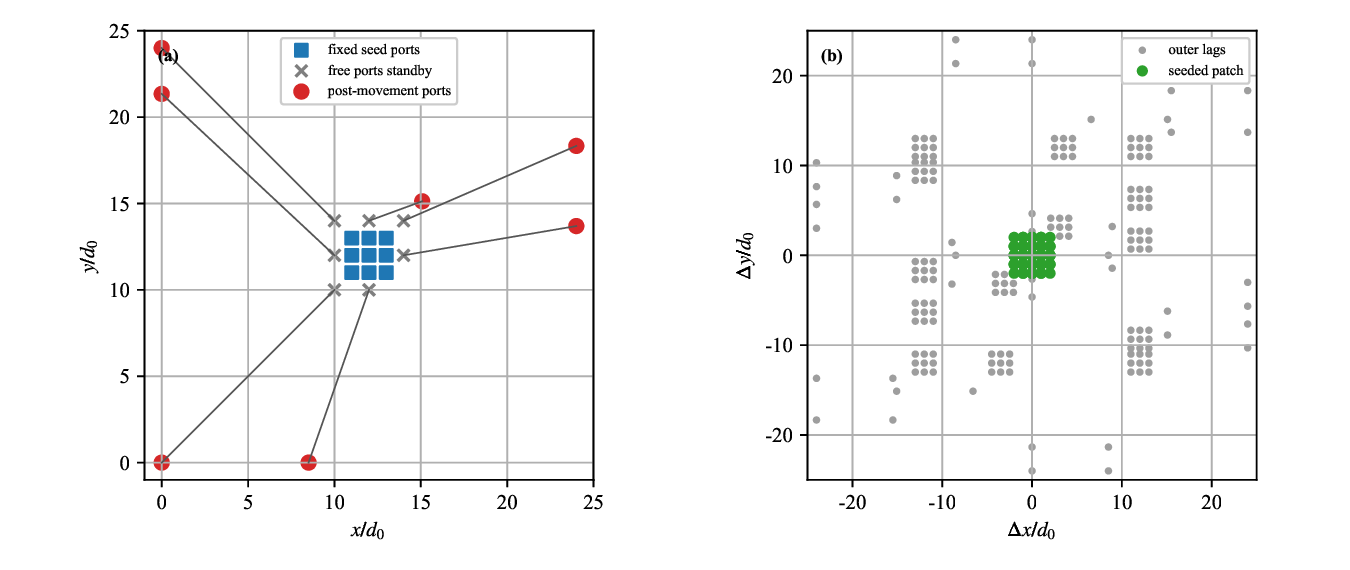}
    \caption{Seed-to-refinement movement and rounded 2-D difference coarray.}
    \label{fig:geometry}
\end{figure}

Figure~\ref{fig:spectrum_aliasing} separates global-basin information from local resolution. The post-movement FAS produces narrow peaks at the true directions together with remote sidelobe structure, whereas the seed spectrum is broader and forms a connected coarse basin. For the well-separated default pair, direct FAS MUSIC can exploit the narrow peaks; the seed gate becomes more important when the peaks merge or finite-snapshot perturbations make initialization unreliable. The two states are therefore complementary rather than individually sufficient over all source separations.

\begin{figure}[!t]
    \centering
    \includegraphics[width=1\columnwidth]{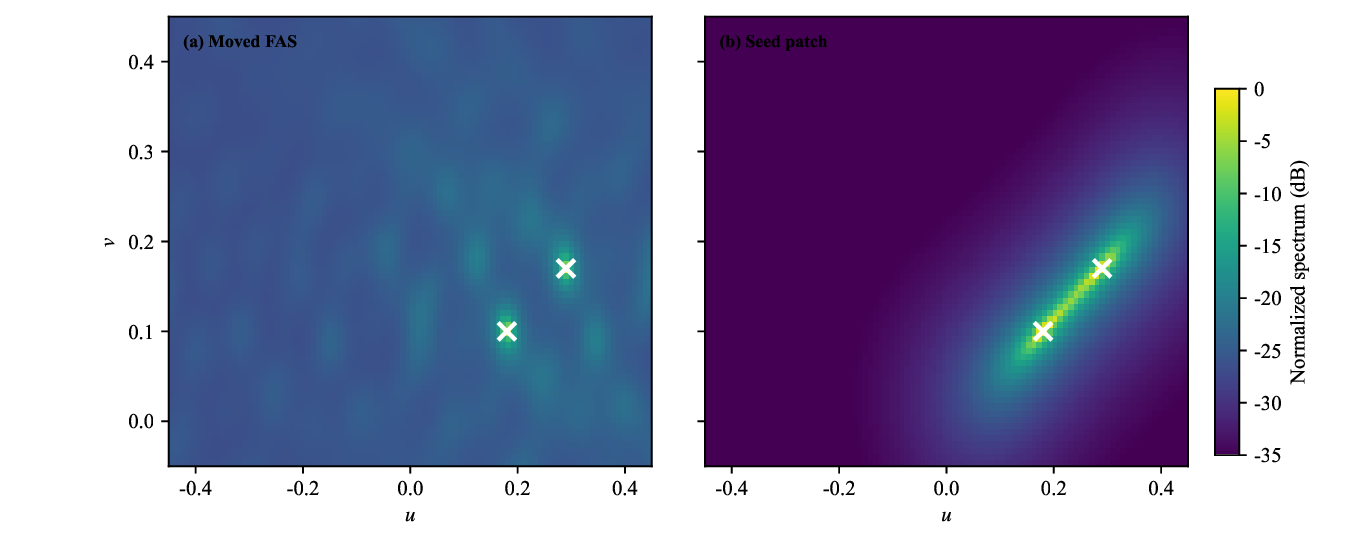}
    \caption{Post-movement and seed-state 2-D MUSIC spectra at SNR $=20$ dB.}
    \label{fig:spectrum_aliasing}
\end{figure}

\subsection{Aperture and Movement Scaling}

Figure~\ref{fig:crb_area} varies the deployment side length $D$ while keeping $N=16$. The compact UPA and the frozen sparse grid are nearly unchanged because their physical coordinates do not adapt to the newly available region. The reconfigured FAS CRB decreases as $D$ grows, confirming that port movement can convert physical area into Fisher information. This is an information-theoretic result, not a complete estimator result; its practical value depends on the seed-controlled refinement evaluated in the following figures.

\begin{figure}[h]
    \centering
    \includegraphics[width=0.98\columnwidth]{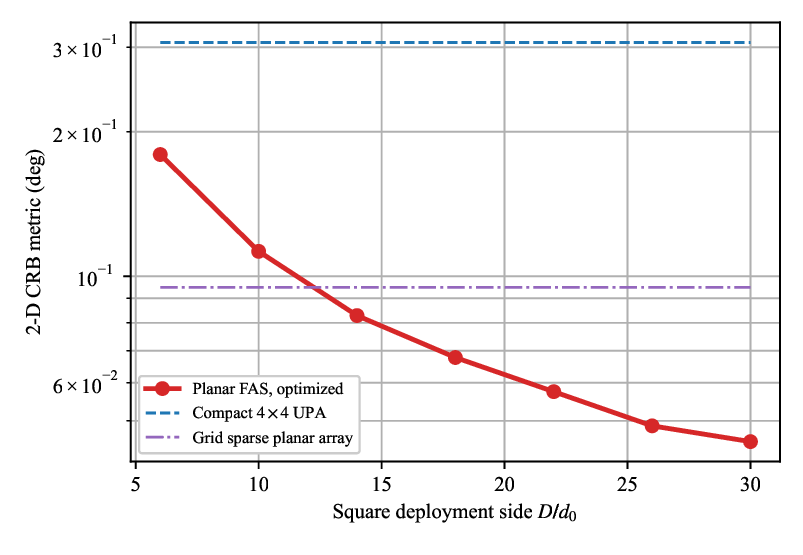}
    \caption{2-D CRB versus the square deployment side length.}
    \label{fig:crb_area}
\end{figure}

Figure~\ref{fig:movement_budget} isolates the effect of physical reconfiguration. At zero movement, the receiver remains a compact frozen aperture. Increasing $B_{\rm mv}$ creates longer nonredundant baselines, so all three grouped metrics decrease. The adjacent bars also show that the empirical root-mean-square error (RMSE) tracks the test-pair CRB, while the prior-averaged CRB measures performance beyond the selected pair. The progressive gain along the feasible movement path distinguishes reconfiguration from a favorable one-time placement.

\begin{figure}[h]
    \centering
    \includegraphics[width=0.95\columnwidth]{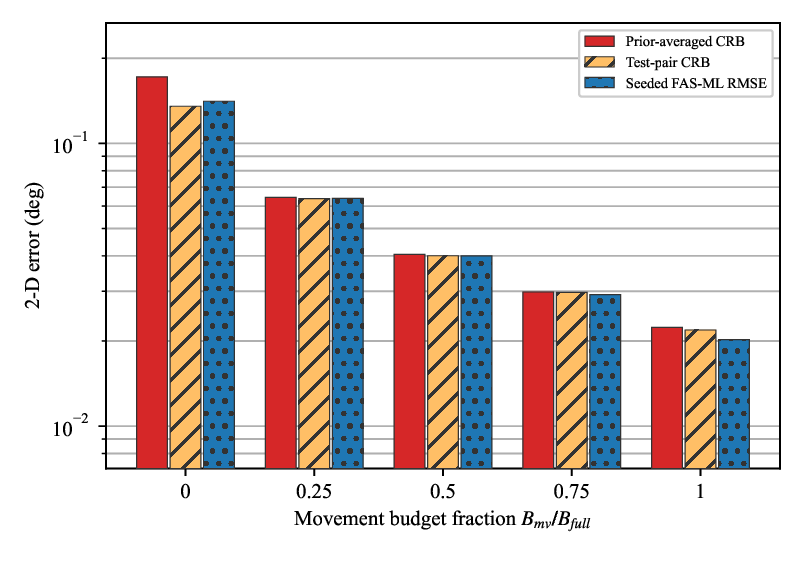}
    \caption{Prior-averaged CRB, test-pair CRB, and seeded FAS-ML RMSE versus the normalized movement budget $B_{\mathrm{mv}}/B_{\mathrm{full}}$.}
    \label{fig:movement_budget}
\end{figure}

\subsection{Estimation Accuracy and Resolution}

Figure~\ref{fig:accuracy_suite} summarizes the three estimator tests. Panel~(a) shows that increasing SNR cannot remove the aperture-dependent gap between the compact UPA, frozen sparse grid, and reconfigured FAS. Panel~(b) reaches the same conclusion under increasing sample support: all estimators improve, but the geometry gap remains under the equal total-snapshot budget. Panel~(c) normalizes every close-source RMSE by seeded FAS-ML. The first source is fixed, while the second source is moved toward it along the direction defined by the default pair in~\eqref{eta} to obtain the specified angular separations. Direct FAS MUSIC incurs a large threshold effect at the two smallest separations, whereas it approaches the proposed estimator after the peaks become resolvable.
This separates the two FAS mechanisms: movement reduces local estimation error, and the fixed seed protects basin selection.

\begin{figure*}[!t]
	\centering
	\subfloat[SNR scaling.\label{fig:accuracy_snr}]{%
		\includegraphics[width=0.318\textwidth]{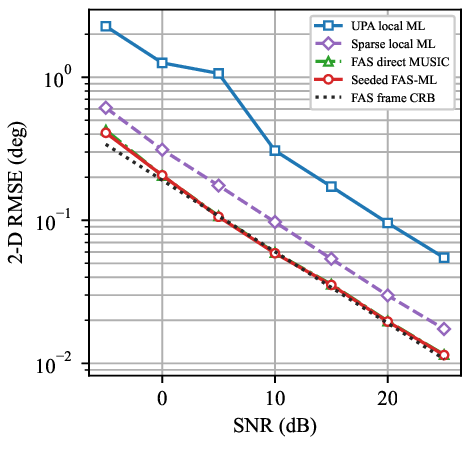}}%
	\hfill
	\subfloat[Snapshot scaling.\label{fig:accuracy_snapshots}]{%
		\includegraphics[width=0.318\textwidth]{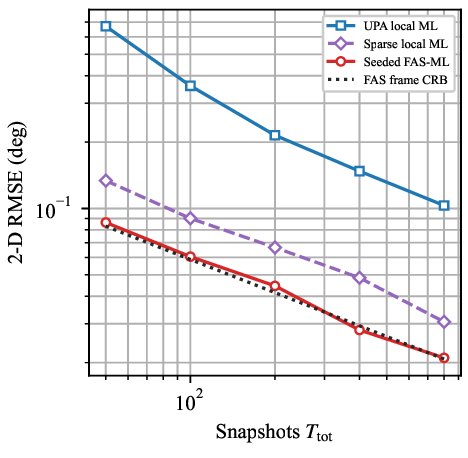}}%
	\hfill
	\subfloat[Close-source penalty.\label{fig:accuracy_resolution}]{%
		\includegraphics[width=0.318\textwidth]{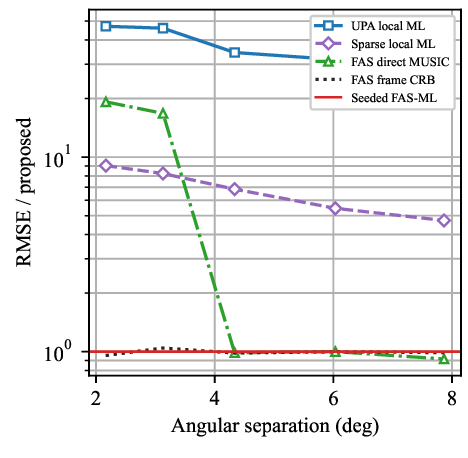}}
	\caption{Compact accuracy validation versus the SNR, snapshots, and source separation.}
	\label{fig:accuracy_suite}
\end{figure*}

Table~\ref{tab:rmse} reports a representative high-SNR operating point. The reconfigured FAS estimators have a clear margin over the strongest frozen-array local-ML baseline, while direct FAS MUSIC remains competitive for this resolved source pair. The table therefore confirms the aperture benefit while showing that the estimator advantage is not universal.

\begin{table}[!t]
\centering
\caption{Equal-Budget 2-D RMSE at SNR $=15$ dB}
\label{tab:rmse}
\renewcommand{\arraystretch}{1.18}
\begin{tabular}{@{}lcc@{}}
\toprule
\textbf{Method} & \textbf{RMSE (deg)} & \textbf{Relative to proposed} \\
\midrule
Compact $4\times4$ UPA MUSIC & $0.1800$ & $5.10\times$ \\
Compact $4\times4$ UPA local ML & $0.1719$ & $4.87\times$ \\
Frozen sparse-grid MUSIC & $0.0540$ & $1.53\times$ \\
Frozen sparse-grid local ML & $0.0538$ & $1.52\times$ \\
Direct FAS MUSIC & $0.0357$ & $1.01\times$ \\
\textbf{Seeded FAS-ML} & $\mathbf{0.0353}$ & $\mathbf{1.00\times}$ \\
FAS frame CRB & $0.0339$ & --- \\
\bottomrule
\end{tabular}
\end{table}

\subsection{Partial Actuation and Design Ablations}

Figure~\ref{fig:n_ports} varies the number of movable free ports $N_f=N-N_s$ while the compact $3\times3$ seed patch is kept fixed for both random and optimized FAS baselines.
This comparison is deliberately seed-matched: the random baseline has the same ambiguity-control seed as the proposed design and randomizes only the additional free-port locations. The compact UPA improves slowly because added ports mainly densify a local grid. The seeded random FAS benefits from aperture spreading but still uses the available baselines inefficiently since its post-movement state is not matched to the source sector. The seeded optimized FAS consistently gives the lowest CRB, showing that the gain comes from information-driven free-port reconfiguration rather than random mobility alone. The curve also provides a hardware-design insight: the first few movable free ports are especially valuable because they add long nonredundant baselines after the coarse state is known, while later ports mainly improve conditioning and robustness.

\begin{figure}[h]
    \centering
    \includegraphics[width=0.88\columnwidth]{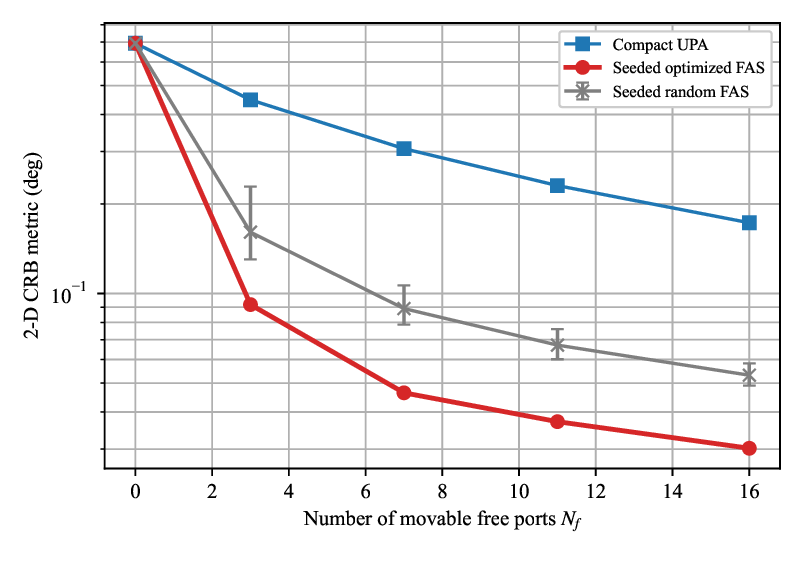}
    \caption{
    2-D CRB versus the number of movable free ports $N_f$, with the compact $3\times3$ seed fixed for the seeded random and optimized FAS designs.}
    \label{fig:n_ports}
\end{figure}

Figure~\ref{fig:movable_count} addresses whether the seed ports should also move. The CRB decreases as more ports are actuated because the feasible refinement set expands, numerically matching the monotonicity in Theorem~\ref{thm:finite_retention}. Each returned spacing- and movement-feasible layout supplies an $L_M$ certificate, whereas the reachable coordinate spans supply $U_M$ without assuming corner collocation. The numerically optimized all-movable design gives the lowest CRB in this comparison, while requiring every port to be actuated, switched, recalibrated, and settled after the seed observation. 
The plotted all-movable point is a feasible numerical benchmark; only the optimum over the larger feasible set defines the information upper bound.

\begin{figure}[!t]
    \centering
    \includegraphics[width=0.88\columnwidth]{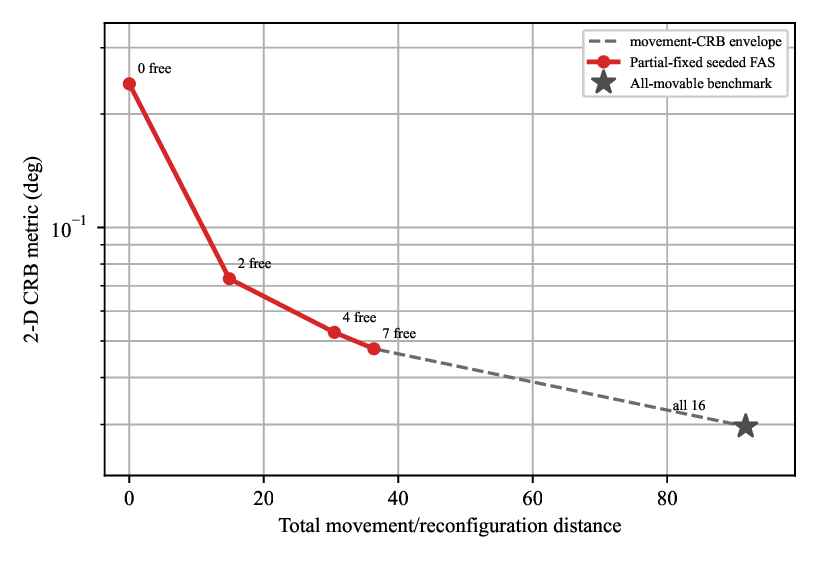}
    \caption{2-D CRB versus the total reconfiguration distance, with the partial-mobility envelope and all-movable benchmark.}
    \label{fig:movable_count}
\end{figure}

Figure~\ref{fig:seed_ablation} measures the probability that seed MUSIC resolves both sources within a one-degree 2-D RMSE. The $2\times2$ seed remains aperture limited, whereas the $4\times4$ seed crosses the resolution threshold at a lower SNR but consumes all 16 ports. The $3\times3$ seed provides the intermediate operating point: its resolution probability rises sharply near the target SNR while seven ports remain available for refinement movement. The estimator does not require seed-stage final accuracy; this stricter ablation quantifies the hardware cost of improving coarse resolution.

\begin{figure}[!t]
    \centering
    \includegraphics[width=0.88\columnwidth]{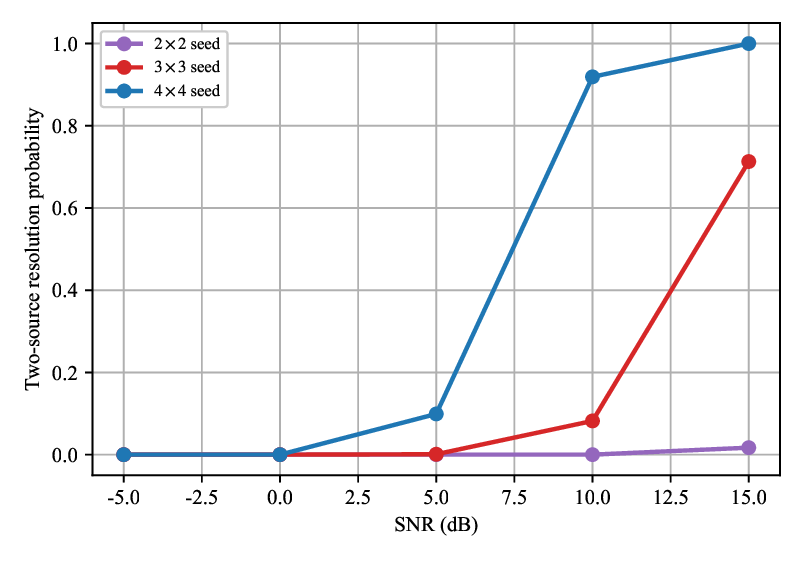}
    \caption{Two-source resolution probability of seed MUSIC versus the SNR for $2\times2$, $3\times3$, and $4\times4$ seed patches.}
    \label{fig:seed_ablation}
\end{figure}

Figure~\ref{fig:regularizer_ablation} separates aperture information from local coarray support. A D-optimal-only boundary design attains a low CRB but covers fewer rounded Nyquist lags, whereas the compact UPA has strong local coverage and a much smaller physical aperture. The proposed regularized design sacrifices a small amount of determinant gain to retain local lag support.
Hence, the rounded-lag metric shapes the physical free-port locations to promote local Nyquist structure, while the final estimator uses the exact manifold of the reconfigured array.
This ablation justifies the composite placement objective without claiming that the coarray score alone predicts estimator RMSE.

\begin{figure}[!t]
    \centering
    \includegraphics[width=0.88\columnwidth]{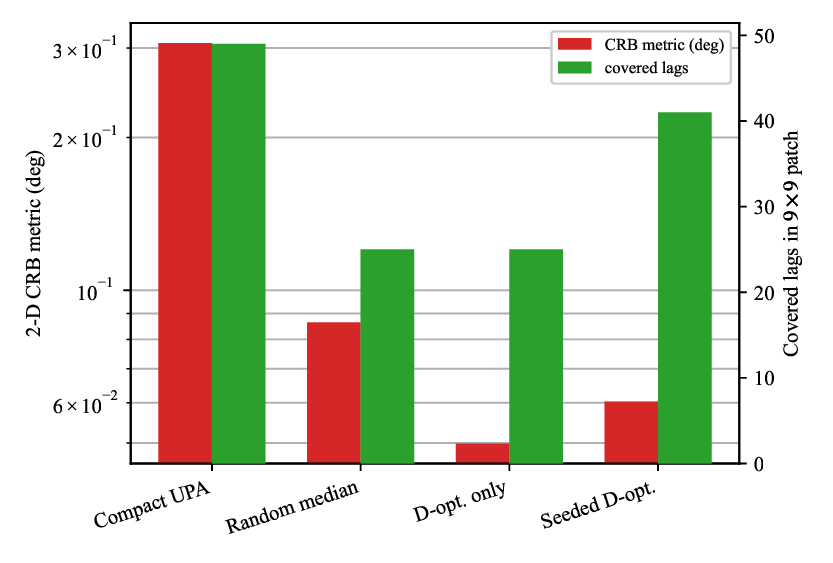}
    \caption{Placement-objective ablation of 2-D CRB and rounded-lag coverage for four array designs.}
    \label{fig:regularizer_ablation}
\end{figure}

\subsection{Optimization Behavior and Computational Cost}

Figure~\ref{fig:search_scaling} compares normalized operation counts as the 2-D search grid is refined. The seeded estimator remains close to the cost of the seed scan because it evaluates the joint likelihood only for a small gated candidate set; a global 2-D ML enumeration grows much faster. 
After normalization by the seed-scan cost at each grid size, the seed-scan curve remains at unity, while the seeded FAS-ML ratio approaches unity as the grid is  refined because its candidate-evaluation and local-refinement overhead grows much more slowly than the seed-scan cost. 
These curves support the complexity orders in Remark~\ref{rem:layout_complexity} and the estimator-complexity remark, while keeping optimization accuracy and search cost conceptually separate.



\begin{figure}[!t]
	\centering
	\includegraphics[width=0.88\columnwidth]{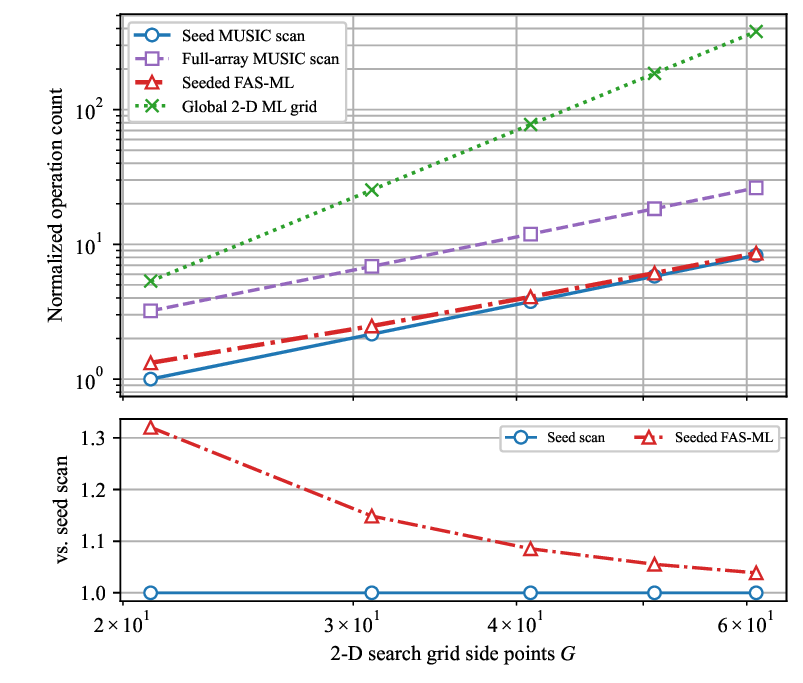}
	\caption{Normalized operation counts versus the 2-D search grid resolution, with complexity ratios relative to the seed scan.}
	\label{fig:search_scaling}
\end{figure}

\subsection{Robustness and Sector Coverage}

Figure~\ref{fig:robustness} generates refinement data at perturbed free-port coordinates while the estimator retains the nominal calibrated manifold. Both RMSE and upper-tail error increase with $\sigma_p$ because the unmodeled phase offsets accumulate across the moved baselines. The result measures model-mismatch sensitivity of the nominal design; it does not claim that the robust objective in Section~\ref{sec:robust} has been optimized.

\begin{figure}[!t]
    \centering
    \includegraphics[width=0.90\columnwidth]{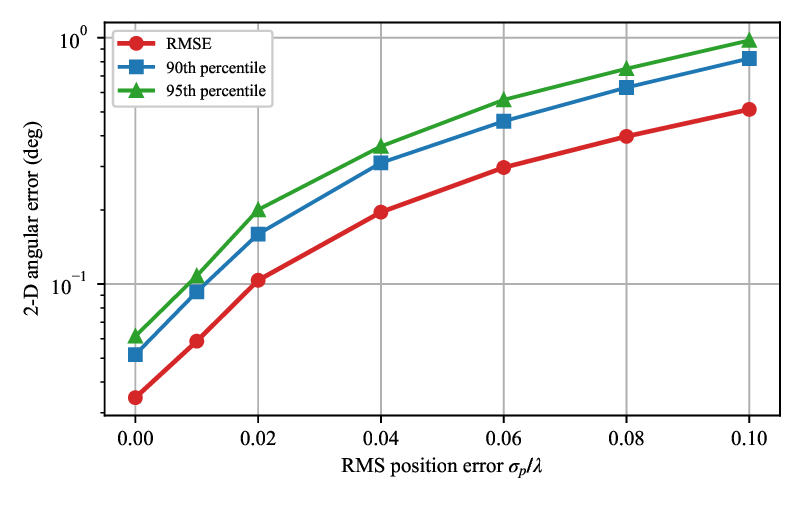}
    \caption{2-D angular error versus the normalized free-port position error.}
    \label{fig:robustness}
\end{figure}

Figure~\ref{fig:crb_heatmap} maps the CRB gain of the optimized FAS over the compact UPA while one source is fixed and the other moves across the sector. The gain remains positive over the tested region, but it is not uniform. As expected, the multi-source FIM becomes less well conditioned when the moving source approaches the fixed source or aligns with less favorable baselines. The result shows that the gain is not limited to one selected source pair, while also making clear that every finite-aperture DOA design remains source-geometry dependent.

\begin{figure}[!t]
    \centering
    \includegraphics[width=0.90\columnwidth]{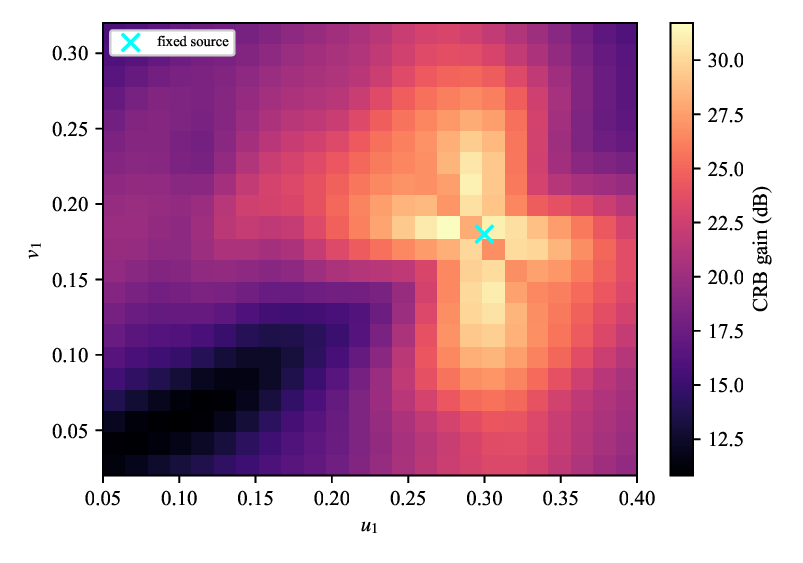}
    \caption{Sector-wide CRB gain of optimized FAS over compact UPA.}
    \label{fig:crb_heatmap}
\end{figure}

\subsection{Prior Adaptation}

The optimized free-port layouts remain prior dependent: broad sectors favor corner and edge diversity, whereas low-elevation or close-source priors favor baselines aligned with their dominant uncertainty directions.

Figure~\ref{fig:prior_shift} compares FAS with a permanently deployed continuous array. The frozen array is optimized for the broad-sector prior, remains fixed, and uses all $T_{\rm tot}$ snapshots; its embedded seed initializes single-state ML. Although the FAS reoptimizes the free ports after each prior shift, all frozen-to-reconfigured ratios remain below one. The permanent full-aperture design is sufficiently well conditioned that avoiding snapshot splitting outweighs adaptation. This result defines an important boundary: the proposed FAS is most relevant to compact standby, retractable deployment, or constrained permanent footprints rather than unconditional dominance over a full-aperture array.

\begin{figure}[!t]
    \centering
    \includegraphics[width=0.90\columnwidth]{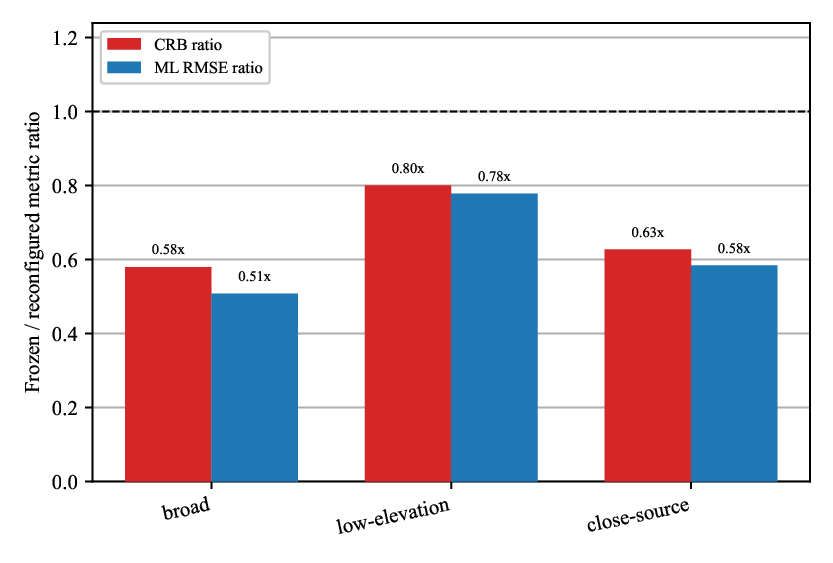}
    \caption{Frozen-to-reconfigured ratios under three prior shifts.}
    \label{fig:prior_shift}
\end{figure}


The experiments show that joint two-state estimation becomes increasingly
valuable for closely spaced sources, whereas reconfiguration provides limited
benefit when a full sparse aperture can remain permanently deployed.

\section{Conclusion}\label{sec:conc}

Starting from the limitations of static sparse arrays, this paper developed a two-state reconfigurable sparse aperture using a partial-mobility planar FAS. A compact, fixed Nyquist seed provides ambiguity-controlled acquisition, while selected fluid ports form source-conditioned sparse baselines for refinement. The policy-level FIM accounts for observation-dependent movement, while the single-source aperture law and finite-port certificate characterize the relaxed all-movable upper bound and feasible partial-mobility designs. An aperture-and-conditioning surrogate generates candidate layouts, which are ranked and locally refined by the exact two-state FIM, followed by seed MUSIC and joint concentrated ML for DOA estimation. Under equal snapshot budgets, simulations show that reconfiguration reduces angular error and retains much of the all-movable benefit with fewer actuated ports. This benefit is mainly limited by position mismatch, movement latency, and the availability of a permanently deployed full aperture. Partial-mobility FAS is therefore most useful when compact standby, retractable deployment, or limited actuation is required. Future work will consider joint movement and observation allocation and hardware validation under wideband, near-field, and coupling effects.

\end{document}